\newcommand{\contract}{\textcolor{entry}{\texttt{contract}}}
\newcommand{\oninput}{\textcolor{entry}{\texttt{function}}}
\newcommand{\view}{\textcolor{entry}{\texttt{view}}}
\newcommand{\public}{\textcolor{entry}{\texttt{public}}}
\newcommand{\memory}{\textcolor{entry}{\texttt{memory}}}
\newcommand{\uint}{\textcolor{entry}{\texttt{uint}}}
\newcommand{\struct}{\textcolor{entry}{\texttt{struct}}}
\newcommand{\returns}{\textcolor{entry}{\texttt{returns}}}
\newcommand{\return}{\textcolor{entry}{\texttt{return}}}
\newcommand{\trues}{\textcolor{entry}{\texttt{true}}}
\newcommand{\falses}{\textcolor{entry}{\texttt{false}}}
\newcommand{\is}{\textcolor{entry}{\texttt{is}}}
\newcommand{\new}{\textcolor{entry}{\texttt{new}}}
\newcommand{\comment}[1]{\texttt{\textcolor{OliveGreen}{#1}}}
\definecolor{burntorange}{rgb}{0.8, 0.33, 0.0}
\newcommand{\requirecomment}[1]{\texttt{\textcolor{burntorange}{#1}}}
\newcommand{\bool}{\textcolor{entry}{\texttt{bool}}}
\newcommand{\require}{\textcolor{entry}{\texttt{require}}}
\newcommand{\ifs}{\textcolor{entry}{\texttt{if}}}
\newcommand{\for}{\textcolor{entry}{\texttt{for}}}
\colorlet{iomsg}{MidnightBlue}
\colorlet{party}{brown}
\colorlet{entry}{NavyBlue}
\colorlet{string}{BlueViolet}
\newif\ifsubmission\submissiontrue
\begin{document}

\title{Sliding Window Challenge Process for Congestion Detection}
\authorrunning{A. Lotem et al.}
\author{Ayelet Lotem$^1$, Sarah Azouvi$^2$, Patrick McCorry$^{3}$ \and Aviv Zohar$^1$}
\institute{$^1$The Hebrew University of Jerusalem, \email{\{ayelem02,avivz\}@cs.huji.ac.il} \\
$^2$Protocol Labs, \email{sarah.azouvi@protocol.ai} \\
$^3$Infura, \email{stonecoldpat@gmail.com}}

\maketitle         

\newcommand{\floor}[1]{\lfloor #1 \rfloor}
\newcommand{\ayelet}[1]{{\color{blue}\textbf{Ayelet}: #1}}
\newcommand{\sarah}[1]{{\color{purple}\textbf{Sarah}: #1}}
\newcommand{\fixes}[1]{{\color{red} #1}}
\newcommand{\patrick}[1]{{\color{red}\textbf{Patrick}: #1}}
\newcommand{\aviv}[1]{{\color{green}\textbf{Aviv}: #1}}
\newcommand{\todo}[1]{{\color{orange}\textbf{TODO}: #1}}
\newcommand{\change}[1]{{\color{red} #1}}
\newcommand{\block}{\mathbf{B}}
\newcommand{\maxfee}{\mathsf{fee_{max}}}
\newcommand{\blockcongestion}{\textsf{BlockCongested}}
\newcommand{\txset}{\mathcal{T}}
\newcommand{\tx}{\mathsf{tx}}

\newcommand{\blocksizelimit}{\ensuremath{\mathcal{B}}}
\newcommand{\fee}[1]{\ensuremath{\phi(#1)}}
\newcommand{\size}[1]{\ensuremath{w(#1)}}
\newcommand{\weight}[2]{\ensuremath{\mathcal{W_{#2}(#1)}}}
\newcommand{\utility}[1]{\ensuremath{\mathcal{U_{#1}}}}
\newcommand{\IsPeriodUnCongested}[1]{\textsf{UCP{#1}}}
\newcommand{\UCP}[2]{\textsf{UCP}_{#1}(#2)}
\newcommand{\IsBlockCongested}[2]{\mathcal{C_{#1}(#2)}}

\newenvironment{newProposition}[1]{%
  \renewcommand\theproposition{#1}
  \proposition
}{\endtheorem}

\newenvironment{newTh}[1]{%
  \renewcommand\thetheorem{#1}
  \theorem
}{\endtheorem}
\begin{abstract}

	Many prominent smart contract applications such as payment channels, auctions, and voting systems often involve a mechanism in which some party must respond to a challenge or appeal some action within a fixed time limit. This pattern of challenge-response mechanisms poses great risks if, during periods of high transaction volume, the network becomes congested. In this case, fee market competition can prevent the inclusion of the response in blocks, causing great harm. As a result, responders are allowed long periods to submit their response and overpay in fees.
	To overcome these problems and improve challenge-response protocols, we suggest a secure mechanism that detects congestion in blocks and adjusts the deadline of the response accordingly. 
	The responder is thus guaranteed a deadline extension should congestion arise.  We lay theoretical foundations for congestion signals in blockchains and then proceed to analyze and discuss possible attacks on the mechanism and evaluate its robustness.
	Our results show that in Ethereum, using short response deadlines as low as 3 hours, the protocol has $>99\%$ defense rate from attacks even by miners with up to $33\%$ of the computational power. 
	Using shorter deadlines such as one hour is also possible with a similar defense rate for attackers with up to $27\%$ of the power.
	
	\keywords{Congestion, Challenge-Response}
\end{abstract}

\section{Introduction}
DeFi platforms constructed over blockchains such as Ethereum have seen a recent boom of activity and interest. Their growing ecosystem allows for increasingly complex financial interactions executed in a fully decentralized manner. The main building blocks used to construct these platforms are the smart contracts that define the rules of interaction in code.

Smart contracts enable a wide range of applications, such as auctions, voting systems, and second layer protocols (e.g., payment channels) that operate above the blockchain layer. They typically provide rules that allow them to act as an automated adjudicator in case conflicts between participants arise.

For many applications, interactions with smart contracts are time dependent and are even subject to deadlines, meaning that in some cases, transactions added after a specific moment will effectively be rejected. For example in the case of auctions, a bid must be received before the end of the auction otherwise it is not valid.
Another example appears in the context of payment channels~\cite{gudgeon2020sok} where participants have a limited interval of time to dispute the division of funds if they disagree with their peers.

A major weakness of such deadlines is that in cases where the blockchain is congested, users that submit transactions will not have them included in blocks in time. In fact, several attacks and failures can be attributed directly to this weakness (we provide some examples below). One mitigation often employed by participants is to offer higher fees for transactions with deadlines which means users are usually overpaying. Another is to extend the deadlines which greatly delays processing and settlement within the context of the relevant smart contract. In many cases, transaction fees and deadlines are decided upon in advance, before the exact conditions that will prevail when the transaction is actually transmitted are known, which causes participants to take wider safety margins and increases costs further.
Due to the well-known scalability issues of blockchains~\cite{bano2019sok}, we expect congested periods to become increasingly more common, which will directly impact the design of time-sensitive smart contracts.

\vspace{2pt} \noindent{\bf Our Contributions.}
In this work we present a mechanism aimed at solving these issues. We propose to set short deadlines that are automatically extended if congestion occurs. We lay the theoretical foundations of congestion monitoring in blockchains and formalize the notion of challenge-response protocols in this context. 
We then propose two different protocols to detect congestion over multiple blocks: the `$L$ Consecutive Blocks' protocol defines uncongestion by the existence of $L$ consecutive uncongested blocks; its generalization, the `Sliding Window (K-out-of-N)' protocol, defines uncongestion by the existence of N consecutive uncongested blocks with K uncongested blocks among them.
We show that the Sliding Window protocol is more resilient to attacks than the $L$ Consecutive Blocks protocol when attacked by miners. Furthermore, we propose a new opcode for Ethereum that will provide the required functionality; we also provide an implementation (not requiring new opcodes) in Solidity, using opcodes introduced by Ethereum Improvement Proposal 1559 (EIP 1559)~\cite{buterin2019eip}.

\paragraph{Examples of Congestion Attacks and Related Failures.}
A recent well-known example of congestion-related failure took place on \emph{Crypto Black Thursday} (March 12th, 2020) when the price of Ethereum dropped by more than 50\% in less than 24 hours~\cite{cryptoblackthursday2}. This led to a panic-sale of coins and increased congestion. At the peak, during a 2-3 hour window, the Ethereum blockchain's fees climbed to \$1.65 on average, more than $10$ times their cost on previous days.

The drop in ETH price triggered many MakerDAO auctions to liquidate collateral (typically collateral on short positions must be sold if prices fluctuate too much). The tokens to be sold were purchased at almost no cost due to the inability of many bidders to send transactions and participate. 
This has, allegedly, been leveraged by one user to gain \$8.3 million worth of ETH~\cite{cryptoblackthursday1}.

Several studies~\cite{censorship_attacks,harris2020flood} deal with different types of attacks designed to prevent a party from responding on time to a challenge~\cite{censorship_attacks}. Harris and Zohar~\cite{harris2020flood} present an attack where the attacker forces many victims at once to flood the blockchain with claims for their funds. The ensuing congestion allows the attacker to steal the funds that cannot be claimed before the deadline.  
Our protocol will prevent these issues by extending the deadlines until the congestion passes.

\section{Related Work}\label{sec:relatedwork}
Congestion is a real-world problem faced by the most prominent cryptocurrencies.
In addition to the popular examples of Crypto Black Thursday or Cryptokitties, widely discussed online~\cite{cryptoblackthursday1,cryptoblackthursday2,consensys_2018}, Sokolov~\cite{sokolov2021ransomware} examined periods of congestion caused by ransomware.

One way to deal with congestion is to improve the scalability of the underlying consensus protocol~\cite{sompolinsky2018phantom,sompolinsky2015secure,sompolinsky2016spectre,croman2016scaling,eyal2016bitcoin} or to introduce higher-level layers that help to scale. Solutions ranging from sharding~\cite{wang2019sok}, off-chain payment channels~\cite{gudgeon2020sok} or layer-zero optimization~\cite{tanana2019avalanche}
(i.e., network-level optimization) have been considered. All these solutions improve the number of transactions per second that the network can process, but congestion may still occur even at higher rates.

Other methods that help to ensure that time-sensitive transactions are processed are rather ad-hoc. For example, the \emph{replace by fee}~\cite{replace-by-fee} and \emph{child pays for parent}~\cite{cpfp} mechanisms allow users to add or change the fees of their transactions. 
Bitcoin's fee mechanism---equivalent to a first-price auction---is sub-optimal and often results in users paying more than what is necessary. EIP 1559 was made to change this mechanism in Ethereum~\cite{roughgarden2020eip,buterin2019eip}.
EIP 1559 implements a \emph{base fee} that is burned. This base fee can be seen as an indication of the level of congestion in recent blocks, and we utilize this in our implementation. 

Another line of research that could potentially prevent transaction fees from spiking considers order-fairness consensus protocols~\cite{kelkar2020order,kursawe2020wendy}. The idea is to ensure that transactions are ordered in the blockchain in the same order they arrived in. This also helps to avoid problems such as front-running~\cite{daian2019flash}.

\section{Preliminaries and Definitions} \label{sec:FormalDef}
\subsection{Challenge-Response Protocols} \label{background:challenge-response}
A challenge-response protocol is an implementation of a pattern in which some party must respond to a challenge within a fixed time limit. This pattern consists of a challenge that takes effect at time \textbf{$T_c$} and a response deadline \textbf{$T_{rd}$} which is the latest time by which response to the challenge will be accepted. We call the time period between \textbf{$T_c$} and \textbf{$T_{rd}$} the challenge window. 
Responding to the challenge during the challenge window yields different results compared to responding \emph{after} the deadline.
The protocol we propose inspects the challenge window period and extends it (by extending \textbf{$T_{rd}$}) as long as the blockchain stays congested.

\subsection{Blockchain Congestion}\label{sec:definitions}
Our protocol has two components. First it relies on a mechanism to define what it means for a block to be congested. We then use this definition to define an uncongested \emph{period}. Intuitively, a period will be (un)congested if some threshold of blocks is (un)congested.
We start by defining block congestion before moving on to presenting different period congestion definitions and choosing one that meets our requirements.

\paragraph{Blocks and Transactions.} A block $\block = \{\tx_1,\cdots,\tx_n\}$ is as a set of transactions
(we ignore the order of transactions in the block as well as other data---such as nonce---as they are irrelevant to our problem).
Transactions pending to be included in a block are kept locally by each node in their \emph{mempool} until they are included in the chain.
Each transaction $\tx$ has a size \size{tx}, and a fee density \fee{tx}. The fee paid by the transaction is therefore $\size{tx}\cdot\fee{tx}$.
We define the total weight of transactions in a block $\block$ with a fee density above $\theta$ as  $\weight{\theta}{\block} \coloneqq \sum_{tx\in B:~\fee{tx}\geq \theta} \size{tx}$.

Blocks can contain transactions with total size bounded by \blocksizelimit, i.e., $\weight{\text{0}}{\block} \le \blocksizelimit$.  
For simplicity, 
we treat every block as \emph{full}, i.e., for any $block$ $\weight{\text{0}}{\block} = \blocksizelimit$ (if necessary,
we fill them artificially with transactions with a fee of 0).

The total amount of fees collected from a block by the miner is $\utility{\block} \coloneqq \sum_{tx\in B} \size{tx}\cdot\fee{tx}$. 
If the size of the mempool is bigger than the maximum block size $\weight{\text{0}}{\block}$, we assume that honest miners choose the transactions in a way to maximize the fees they get.

\paragraph{Period.}\label{def:period}
A period $Pe=(b_1,b_2,...,b_n)$ in the blockchain is a non-empty sequence of \textbf{consecutive} blocks. We denote the length (number of blocks) of the period by $|Pe|=n$, and write for $i\in\{1,...,n\}$: $Pe[i] = b_i\in Pe$.
For a period $P_2$, we say that period $P_1$ is included in $P_2$ and note $P_1\subseteq P_2$ if every block in $P_1$ is included in $P_2$.

In this work, we want to capture the notion of congestion: a phenomenon where there's a spike in the number of transactions waiting in the mempool. Since the mempool is not part of the blockchain, we instead rely on the data in the blocks in order to define congestion. We propose the following definition for \emph{block congestion}.

\begin{definition}[$(\theta,\gamma)$-congestion] \label{block_congestion_def}
	We say that a single block $\block$ is $(\theta,\gamma)$-conges-ted if
	$\weight{\theta}{\block} \geq \gamma \cdot \blocksizelimit$
	and denote $\IsBlockCongested{\theta,\gamma}{\block}=1$, where $\mathcal{C}_{\theta,\gamma}$ is the corresponding indicator function.
\end{definition}

Per this definition, all transactions above fee density $\theta$ are examined and required to make up at least a $\gamma$-fraction of the block in terms of size.
Intuitively, for $\gamma=1$ the definition captures that if a block is $(\theta,1)$-congested, a transaction needs to have a fee density that is at least $\theta$ in order to have a better chance of being included. 
In other words, we use the price of entering a transaction to the blockchain as a reliable signal for congestion.

For a block $\block$ and a fee density $\theta \geq 0$, we define the \emph{$\theta$-weight threshold} $\gamma_{\block}(\theta)$ as the maximum fraction of the block weight under which
the block is $(\theta,\gamma)$-congested. From definition~\ref{block_congestion_def} it is clear that $\gamma_{\block}(\theta) = \frac{\weight{\theta}{\block}}{\blocksizelimit}$.
Similarly, for a block $\block$ and a fraction $\gamma \geq 0$, we define the \emph{$\gamma$-fee density threshold} $\theta_{\block}(\gamma)$ as the maximum fee density under which the block is $(\theta,\gamma)$-congested ($\theta_{\block}(\gamma) \coloneqq \max\{\theta \mid \IsBlockCongested{\theta,\gamma}{\block}=1\}$).

\paragraph{Block Manipulation.} One of the key measures we are interested in is when is an adversary able to manipulate blocks' congestion signals. When a miner mines a block, they can choose to include transactions from their mempool or to add dummy transactions that move money between their accounts and pay a fee (to themselves), making the fees appear different than they ought to be.
However, miners cannot manipulate blocks at arbitrary heights, and doing so would incur a cost. 
The miner's chance of mining a new block depends on its relative computational power. Therefore, as is standard, we denote the computational power of an adversary by $\alpha$. Each block has a probability $\alpha$ to be mined by the adversary, and $1-\alpha$ to be mined by the other miners.
Furthermore, giving up mempool transactions means missing out their fees and hence induces a loss that we compute in the next two propositions.

\begin{proposition}\label{prop:one}

	An adversary manipulating a block $\block$ to make it $(\theta_1,\gamma_1)$-conges-ted when it is not, will lose a potential profit of at-least  $\blocksizelimit\cdot\int_{1-(\gamma_1-\gamma_{\block}(\theta_1))}^{1} \theta_{\block}(\gamma) \,d\gamma$.
\end{proposition}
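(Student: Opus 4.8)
The plan is to read $\theta_{\block}(\cdot)$ as the ``inverse demand curve'' of the block: list the transactions of $\block$ along a normalized weight axis $[0,1]$ in order of decreasing fee density, so that $\theta_{\block}(\gamma)$ is the fee density at position $\gamma$ and, for every $0\le a\le b\le 1$, the quantity $\blocksizelimit\int_a^b\theta_{\block}(\gamma)\,d\gamma$ equals the total fees paid by the transactions occupying positions $[a,b]$. In particular $\utility{\block}=\blocksizelimit\int_0^1\theta_{\block}(\gamma)\,d\gamma$, and the transactions with fee density at least $\theta_1$ are exactly the prefix $[0,\gamma_{\block}(\theta_1)]$, since $\gamma_{\block}(\theta_1)=\weight{\theta_1}{\block}/\blocksizelimit$.

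First I would unpack the hypothesis: ``$\block$ is not $(\theta_1,\gamma_1)$-congested'' means $\weight{\theta_1}{\block}<\gamma_1\blocksizelimit$, i.e.\ $\gamma_{\block}(\theta_1)<\gamma_1$; set $\delta:=\gamma_1-\gamma_{\block}(\theta_1)>0$. Next I would observe that, because an honest miner fills a (full) block with the highest fee-density transactions from the mempool, the mempool holds exactly $\gamma_{\block}(\theta_1)\blocksizelimit$ weight of transactions with fee density $\ge\theta_1$: if it held strictly more, then---fee density being precisely the inclusion priority---the fee-maximizing block $\block$ would contain more of them, contradicting the value of $\gamma_{\block}(\theta_1)$, unless it consisted entirely of such transactions, in which case $\gamma_{\block}(\theta_1)=1\ge\gamma_1$ and $\block$ would already be congested.

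Now consider any manipulated block $\block'$. To be $(\theta_1,\gamma_1)$-congested it must carry at least $\gamma_1\blocksizelimit$ weight of transactions of fee density $\ge\theta_1$; at most $\gamma_{\block}(\theta_1)\blocksizelimit$ of this can be real mempool transactions, so at least $\delta\blocksizelimit$ of it must be dummy transactions, whose fees the adversary pays to itself and which therefore add nothing to its net revenue while consuming block weight. Hence the adversary's net revenue from $\block'$ is maximized by keeping all $\gamma_{\block}(\theta_1)\blocksizelimit$ real transactions of fee density $\ge\theta_1$, using exactly $\delta\blocksizelimit$ dummy weight, and filling the remaining $(1-\gamma_1)\blocksizelimit$ capacity with the most valuable real transactions left---the slice occupying positions $[\gamma_{\block}(\theta_1),\,1-\delta]$ (using $\gamma_{\block}(\theta_1)+\delta+(1-\gamma_1)=1$). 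So its net revenue is at most $\blocksizelimit\int_0^{1-\delta}\theta_{\block}(\gamma)\,d\gamma$, and the foregone profit is at least $\utility{\block}-\blocksizelimit\int_0^{1-\delta}\theta_{\block}(\gamma)\,d\gamma=\blocksizelimit\int_{1-\delta}^{1}\theta_{\block}(\gamma)\,d\gamma$, which is the claimed bound after substituting $\delta=\gamma_1-\gamma_{\block}(\theta_1)$.

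The step I expect to be the main obstacle is justifying that optimization rigorously: one must show that no other choice of transactions for $\block'$ yields larger net revenue subject to the congestion constraint---in particular that evicting the cheapest real transactions is optimal and that trading a real fee-density-$\ge\theta_1$ transaction for a dummy never helps. This is a standard exchange argument: starting from an arbitrary feasible $\block'$, swapping a dummy for an available real $\ge\theta_1$ transaction, or a retained cheaper real transaction for a rejected more expensive one, never decreases net revenue and never breaks feasibility, so one may reduce to the canonical block above. The rest is the bookkeeping identity on the interval lengths and the additivity of $\int\theta_{\block}$ over sub-intervals; a minor point is that $\theta_{\block}$ is only a decreasing step function, so the ``slice $=$ integral'' statements hold up to boundary sets of measure zero, which do not affect any of the integrals.
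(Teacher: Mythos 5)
Your proposal is correct and follows essentially the same route as the paper's proof: the adversary must devote at least a $\gamma_1-\gamma_{\block}(\theta_1)$ fraction of the block to dummy transactions, optimally evicts the lowest-fee-density real transactions, and the foregone fees are exactly $\blocksizelimit\int_{1-(\gamma_1-\gamma_{\block}(\theta_1))}^{1}\theta_{\block}(\gamma)\,d\gamma$ read off the sorted fee-density curve. You supply more rigor than the paper does (the mempool-saturation observation and the exchange argument are left implicit there, and the paper instead flags integrality of transactions as the reason the bound is only a lower bound), but the underlying argument is the same.
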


\newcommand{\proofpropositionone}{
	\begin{figure}[ht]
		\centering
		\includegraphics[scale=0.48]{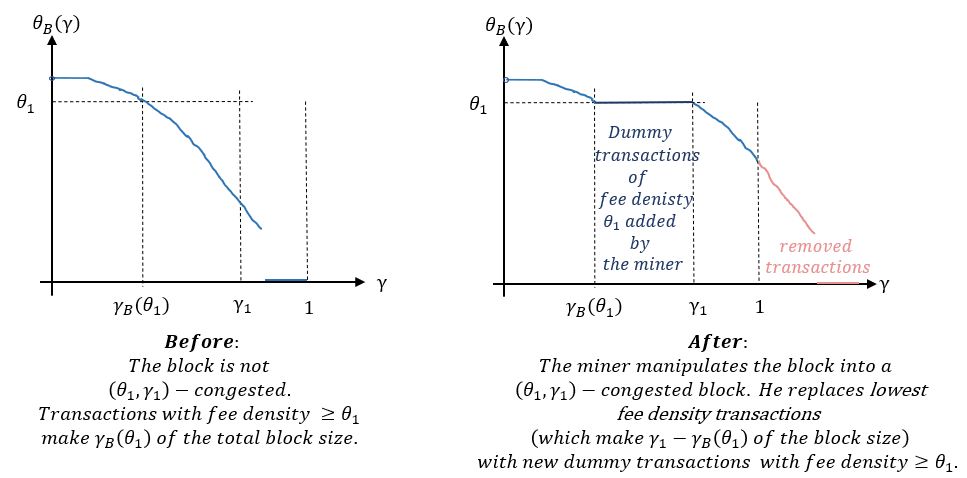}
		\caption{Cumulative function of the $\theta$-weight threshold of a block $\block$ before and after miner manipulation}
		\label{fig:tight1}
	\end{figure}

	\begin{proof}
		Given a block $\block$ which is \textbf{not} $(\theta_1,\gamma_1)$-congested ($\IsBlockCongested{\mathnormal{\theta_1},\mathnormal{\gamma_1}}{\block}=0$), it is possible to manipulate it into a block $\overline{\block}$ which is $(\theta_1,\gamma_1)$-congested ($\IsBlockCongested{\mathnormal{\theta_1},\mathnormal{\gamma_1}}{\overline{\block}}=1$) by replacing
		some of its transactions with dummy transactions that have fee density $\ge \theta_1$.
		In order to maximize its revenue, the adversary will remove the transactions with the lowest fee density.
		The minimum portion of transactions that the adversary needs to remove is $\gamma_1-\gamma_{\block}(\theta_1)$ (by the definition of
		$\gamma_{\block}$), and by doing so the miner misses the rewards associated with removing these legitimate transactions.
		
		Using the notations from above, we compute a lower bound on the miner's loss, which can be expressed by
		$\utility{\overline{\block}}\leq \utility{\block} - \blocksizelimit\cdot~\int_{1-(\gamma_1-\gamma_{\block}(\theta_1))}^{1} \theta_{\block}(\gamma) \,d\gamma$ (see Figure~\ref{fig:tight1}).
		Note this is only a lower bound since a miner can remove transactions only in their entirety and not parts of them.
\end{proof}}
\ifsubmission
\else
\proofpropositionone
\fi

\begin{proposition}\label{prop:two}
	An adversary manipulating a block $\block$ to reverse its signal from $(\theta_1,\gamma_1)$-congested to not congested will lose a potential profit of at-least  $\blocksizelimit\cdot\int_{\gamma_1}^{\gamma_{\block}(\theta_1)}(\theta_{\block}(\gamma)-\theta_1) \,d\gamma$.
\end{proposition}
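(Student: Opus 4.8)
The plan is to reuse the template of the proof of Proposition~\ref{prop:one}, tracking fees through the cumulative profile $\theta_{\block}(\cdot)$, but with one extra accounting term for the transactions the adversary must insert in order to keep the block full. Since $\block$ is $(\theta_1,\gamma_1)$-congested we have $\gamma_{\block}(\theta_1)=\weight{\theta_1}{\block}/\blocksizelimit\geq\gamma_1$, and producing a manipulated block $\overline{\block}$ that is \emph{not} $(\theta_1,\gamma_1)$-congested means forcing $\weight{\theta_1}{\overline{\block}}<\gamma_1\blocksizelimit$. First I would argue that the cheapest way to achieve this is to remove only transactions of fee density $\geq\theta_1$ and to refill the freed space only with transactions of fee density $<\theta_1$: removing a $<\theta_1$ transaction or inserting a $\geq\theta_1$ transaction never helps to drive $\weight{\theta_1}{\cdot}$ down, and the removed set of $\geq\theta_1$ transactions must have total size at least $(\gamma_{\block}(\theta_1)-\gamma_1)\blocksizelimit$ (in fact slightly more, since the inequality defining uncongestion is strict).

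Next I would bound the net loss from below. To minimise forgone fees the adversary removes the \emph{cheapest} transactions of density $\geq\theta_1$, i.e.\ those occupying the weight-fraction interval $[\gamma_1,\gamma_{\block}(\theta_1)]$ of $\block$ sorted by decreasing fee density; by the same area-under-$\theta_{\block}$ identity used in Proposition~\ref{prop:one}, the fees given up are $\blocksizelimit\cdot\int_{\gamma_1}^{\gamma_{\block}(\theta_1)}\theta_{\block}(\gamma)\,d\gamma$. The ingredient absent from Proposition~\ref{prop:one} is the credit from the replacement transactions: they fill a weight of $(\gamma_{\block}(\theta_1)-\gamma_1)\blocksizelimit$ and each has fee density below $\theta_1$, so together they earn back strictly less than $\theta_1\cdot\blocksizelimit\cdot(\gamma_{\block}(\theta_1)-\gamma_1)=\blocksizelimit\cdot\int_{\gamma_1}^{\gamma_{\block}(\theta_1)}\theta_1\,d\gamma$. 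Subtracting, $\utility{\block}-\utility{\overline{\block}}>\blocksizelimit\cdot\int_{\gamma_1}^{\gamma_{\block}(\theta_1)}(\theta_{\block}(\gamma)-\theta_1)\,d\gamma$, which gives the claimed bound; I would close, as in Proposition~\ref{prop:one}, by remarking that this is only a lower bound, both because transactions are indivisible and because the replacement fees are in fact strictly below the $\theta_1$ ceiling.

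The step I expect to be the main obstacle is the ``without loss of generality'' reduction in the first paragraph, i.e.\ ruling out that a cleverer edit which simultaneously inserts some high-density transactions and evicts extra cheap ones beats the simple strategy. I would handle it by writing the manipulated block as $\overline{\block}=(\block\setminus R)\cup A$ with $R$ and $A$ disjoint from the common part and $\sum_{\tx\in R}\size{\tx}=\sum_{\tx\in A}\size{\tx}$ (both blocks full), expressing the loss as $\sum_{\tx\in R}\size{\tx}\fee{\tx}-\sum_{\tx\in A}\size{\tx}\fee{\tx}$, and using $\weight{\theta_1}{R}-\weight{\theta_1}{A}>(\gamma_{\block}(\theta_1)-\gamma_1)\blocksizelimit$ from the uncongestion requirement. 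Fee-maximality of the honest block $\block$, together with the fact that the adversary's own dummy transactions pay their fees back to the adversary (hence count as zero profit), implies that any genuine transaction the adversary can place in $A$ has fee density at most the smallest fee density occurring in $\block$; a standard exchange argument --- pushing $A$ towards containing only sub-$\theta_1$ transactions and $R$ towards the cheapest super-$\theta_1$ transactions --- then shows that no such elaboration lowers the loss below the quantity computed above. The remaining bookkeeping with $\theta_{\block}(\cdot)$ is identical to Proposition~\ref{prop:one}.
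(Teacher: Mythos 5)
Your proposal is correct and follows essentially the same route as the paper: the loss is the area between the cumulative fee-density profile $\theta_{\block}(\gamma)$ and the constant $\theta_1$ over the weight-fraction interval $[\gamma_1,\gamma_{\block}(\theta_1)]$, obtained by swapping the cheapest transactions of density $\geq\theta_1$ for mempool transactions of density just below $\theta_1$. The paper's own proof is only a two-sentence sketch pointing at a figure, so your added detail (the strict-inequality accounting and the exchange argument ruling out cleverer edits) goes beyond, but does not diverge from, its argument.
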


\newcommand{\proofpropositiontwo}{
	\begin{figure}[ht]
		\centering
		\includegraphics[scale=0.48]{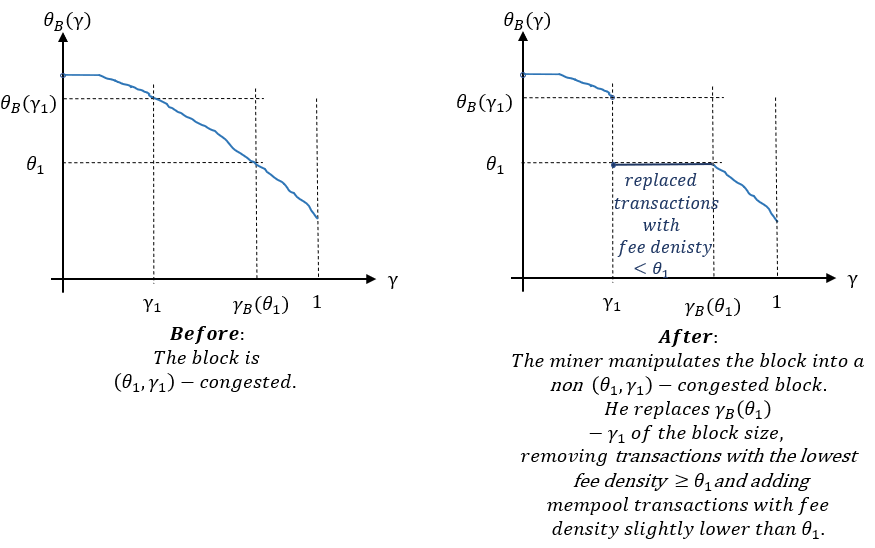}
		\caption{Cumulative function of the $\theta$-weight threshold of a block $\block$ before and after miner manipulation}
		\label{fig:tight2}
	\end{figure} 
	
	\begin{proof}
		The cost of reverting a block signal to uncongested depends on the state of the mempool. It is likely that transactions with fee density slightly lower than $\theta_1$ will be available in the miner's mempool allowing him to substitute the original transactions with fee density $\geq \theta_1$ for these and lose the fee difference. 
		This allows us to give a bottom bound on the loss which can be expressed by
		$\utility{\overline{\block}}\leq \utility{\block} - \blocksizelimit\cdot\int_{\gamma_1}^{\gamma_{\block}(\theta_1)}(\theta_{\block}(\gamma)-\theta_1) \,d\gamma$ (see Figure~\ref{fig:tight2}).
\end{proof}}

\ifsubmission{The proofs for both propositions can be found in Appendix~\ref{app:proofs1}.}
\else
\proofpropositiontwo
\fi

Before moving on to define period congestion, we note that there exist other ways in which block congestion could be defined. For example, in Section~\ref{sec:Implementation}, we take the EIP 1559 \emph{base fee} as a measure of congestion and use it to implement our suggested protocol. We include several other examples that are less efficient in Appendix~\ref{app:block_congestion_def_examples}.

\paragraph{Congestion Vector of a Period.} To determine whether a period $Pe$ is uncongested we will refer to the congestion vector $Pe^c \coloneqq(\IsBlockCongested{}{\mathnormal{Pe[i]}})_{i=1}^{n} \in \{0,1\}^n$ which consists of the congestion signal of its blocks. 
Intuitively, if most of the blocks in the period are congested then the period is congested and vice-versa. However, we must also account for the fact that an adversary may be able to change the congestion signal of some of the blocks, as already discussed.
We will consider different protocols to define period uncongestion, a situation in which the period is considered not congested. An uncongestion period protocol is a function that we denote by
$\IsPeriodUnCongested:\{0,1\}^*\rightarrow \{0,1\}$.
This function takes as input a binary series representing the congestion signal of the blocks in the examined time period.
It will return 0 if the period is congested and 1 otherwise.
This function can furthermore be extended to also provide auxiliary information such as a proof $\pi$ in the case where the period is uncongested. 
For the efficiency of the protocol, we will strive for a definition that can provide a compact and easy-to-verify proof.
Throughout the rest of the paper,
we use $B(n,p)$ to denote the binomial distribution with parameters $n$ and $p$.
\begin{definition}[Period Manipulation]
	For a period $Pe$ and an adversary with a fraction $\alpha$ of the total computational power, we associate a manipulated period $\hat{Pe}$ defined as follows.
	For $i\in\{1,\dots,|Pe|\}$ the adversary can replace $Pe[i]$ with probability $\alpha$, with a block that has a congestion signal of their choice.
	We denote by $\overline{m}=m_{|Pe|}(\alpha) \sim B(|Pe|,\alpha)$ the vector that indicates which of the blocks in the given period the adversary controls, meaning the adversary can replace the $Pe[i]$ block's congestion signal iff $\overline{m}[i]=1$.
	We then define the adversary's manipulation set $S_{\overline{m},Pe} \coloneqq \{\hat{Pe^c} \in \{0,1\}^{|Pe|}\mid \forall 1\leq i \leq |Pe|: \overline{m}[i]=0 \Rightarrow \hat{Pe^c}[i] = Pe^c[i]\}$. Intuitively, $S_{\overline{m},Pe}$ corresponds to the set of possible congestion vectors that the adversary could create by changing the signal of the blocks that it controls.
\end{definition}

In a real world setting, even if there is a long period of uncongestion, it could be the case that one or more of the blocks are fuller than the others due to some randomness in the transactions' arrival time (e.g., there was a temporary high transaction volume).
To account for this randomness,
we make a simplifying assumption that blocks are congested independently with probability $p$ and say that the blockchain is \emph{$p-$congested}. We note that, in reality, congestion is often changing and is usually correlated when considering several consecutive blocks. We leave more complex models of congestion for future work. In our case, the congestion vector of a period $Pe$ chosen at random has a binomial distribution: $Pe^c \sim B(n,p)$.
When studying attacks where the adversary tries to convert a congested period to an uncongested one, we will assume that $p$ is close to one (i.e., most of the blocks are congested),
whereas when studying the opposite case, we will consider $p$ to be close to zero.

Our protocol consists in extending the deadline of challenge-response in the event of a congestion period.
However, to avoid an edge case where the deadline is extended indefinitely, we define $\hat{M}$, an upper bound on the total length of the extended period.

\begin{definition}[$\hat{M}$-maximum Extension] \label{def:M}
	Given a challenge-response protocol in a $p-$congested blockchain
	where the challenge starts at block height $h$, we say that $\hat{M}$ is the maximum extension of the challenge if the deadline cannot be extended further than height $h + \hat{M}$.
\end{definition}

\subsection{Desirable Properties of Protocols}\label{desirableProperties}

In this section, we define some properties that we aim for our protocol to achieve.

In the rest of the paper we use the notation $D \xleftarrow{} s$ to denote that $s$ was selected randomly from the distribution $D$.
We start by describing the two types of attack that we will consider---a congestion attack and an uncongestion attack---before defining the
\emph{robustness} of the protocol, which captures the security of the protocol against either attack. 

\begin{definition}[Congestion/Uncongestion Attack on $Pe$]\label{def:attacks}
	Given a period $Pe$, chosen at random in a $p$-congested blockchain,
	we say that the adversary wins a congestion, resp. uncongestion, attack on $Pe$
	if it can manipulate $Pe$ into an congested, resp. uncongested, period.
\end{definition}

\begin{definition}[$(\alpha, p, q, n)$-congestion Robustness]
	\label{def:con_rob}
	We say an uncongestion period protocol $\IsPeriodUnCongested:\{0,1\}^*\rightarrow \{0,1\}$ is \textbf{$(\alpha, p, q,n)$-congestion robust} if, given an adversary with a relative computational power $\alpha$, 
	his probability of winning a congestion attack, i.e., of successfully manipulating a period $Pe$ of $n$ blocks into a congested period $\hat{Pe}$, is less than $q$.
	$$ B(n,p)\xleftarrow{} Pe~:~
	P_r(\exists \hat{Pe} \in S_{\overline{m}, Pe}~s.t.~\IsPeriodUnCongested(\hat{Pe})=0) \leq q.$$
\end{definition}

\begin{definition}[$(\alpha, p, q, n)$-uncongestion Robustness]
	\label{def:uncon_rob}
	We say an uncongestion period protocol $\IsPeriodUnCongested:\{0,1\}^*\rightarrow \{0,1\}$ is \textbf{$(\alpha, p, q,n)$-uncongestion robust} if, given an adversary with a relative computational power $\alpha$, 
	his probability of winning an uncongestion attack, i.e., of successfully manipulating a period $Pe$ of $n$ blocks into an uncongested period $\hat{Pe}$, is less than $q$.
	$$ B(n,p)\xleftarrow{} Pe~:~
	P_r(\exists \hat{Pe} \in S_{\overline{m}, Pe}~s.t.~\IsPeriodUnCongested(\hat{Pe})=1) \leq q.$$
	
\end{definition}

\begin{definition}[Monotonicity]
	A congestion protocol is \textbf{monotone} if
	for every two periods $Pe_1$ and $Pe_2$, if $Pe_1\subseteq Pe_2$ and $Pe_1$ is considered uncongested, then so is $Pe_2$, i.e., 
	$\forall Pe_1\subseteq Pe_2: \IsPeriodUnCongested(Pe_1^c)=1 \rightarrow \IsPeriodUnCongested(Pe_2^c)=1$.
\end{definition}
A monotone protocol is easier to verify
as the prover only needs to select a portion of blocks from the time period $Pe$ in order to prove uncongestion.
Furthermore, a monotone protocol requires only sporadic access to the blockchain. A prover can go offline and prove uncongestion when they come back online by choosing any uncongested period from the time they were offline.
In the case of a non-monotonic protocol, if the prover is offline during an uncongested period, they cannot prove the uncongestion of the longer period, after they came back online, they missed the uncongested period.

\paragraph{Efficiency Properties.}
We define two properties that capture the efficiency of the protocol.
\begin{itemize}
	\item \textbf{Concise proof size}  The evidence needed to prove uncongestion of a period should be as concise as possible.
	\item \textbf{Concise refresh information} The extra information needed to be kept when checking the congestion signal of a period that
	has already been extended due to congestion should be as concise as possible. Ideally, when we extend a period from $Pe_1$ to $Pe_2$ in order to check $Pe_2$ for congestion, we should not have to recheck every block in $Pe_1$ but, rather, aggregate this information.
\end{itemize}

In the next section we will discuss different period congestion protocols with the goal of finding one that will be proof efficient and robust against an attacker with reasonable hash rate with high probability.

\section{Uncongested Period Protocols}\ \label{sec:UncongestedPeriodProtocols}
In this section, we examine different protocols that fit the definition of congestion of a period $Pe$.
We start by presenting ``naive'' protocols and discuss why they are not good enough, i.e., why they lack the desirable properties defined in Section~\ref{desirableProperties}. 

\subsection{Strawman Protocols}

\begin{definition}[Cumulative M]
	Period $Pe$ is uncongested if there exists M blocks which are uncongested:
	$\UCP{CM}{Pe^c}=1 \leftrightarrow \left(\sum_{b\in Pe} (1-\IsBlockCongested{}{\mathnormal{b}})\geq M\right)$.
\end{definition}
This protocol is monotonic but is not sufficiently robust to adversarial attacks:
if we wait long enough, the probability of the adversary controlling M blocks becomes overwhelming (even if $\alpha$ is small). We solve this in the next strawman by considering the percentage of blocks instead of a fixed number.

\begin{definition}[Percentage]
	A period $Pe$ is uncongested if $x\%$ of its blocks are not congested:
	$\UCP{PC}{Pe^c}=1 \leftrightarrow \left(\sum_{b\in Pe} (1-\IsBlockCongested{}{\mathnormal{b}})\geq \frac{x}{100}\cdot|Pe|\right) $.
\end{definition}
This protocol is much more robust but has the drawback of not being monotonic. For example, if all blocks are uncongested during the first part of the period and congestion begins in the second part, then the beginning of the period is uncongested while the whole period may not be. 

We now suggest the following monotonic rule:
\begin{definition}[$L$ Consecutive Blocks]
	A period $Pe$ is uncongested if there exists at least L
	consecutive uncongested blocks included in it:\\
	$\UCP{L}{Pe^c}=1 \leftrightarrow \left(\exists~ 1 \leq i \leq |Pe|-L+1~s.t.~\forall~0\leq j \leq L-1:Pe^c[i+j]=0\right) $.
\end{definition}
\newcommand{\proofconscuivemonotone}{
	\begin{newProposition}{3}
		The $L$ Consecutive Blocks protocol is monotonic.
	\end{newProposition}
	
	\begin{proof}
		Given a period $Pe_1$,
		uncongested according to the $L$ Consecutive Blocks protocol, which is included in period $Pe_2$:
		\begin{align*}
			&\UCP{L}{Pe_1^c}=1\Rightarrow \\
			& \exists~ 1 \leq i_1 \leq |Pe_1|-L+1~s.t.~\forall~0\leq j \leq L-1:Pe_1^c[i_1+j]=0 \\
			& Pe_1\subseteq Pe_2 \Rightarrow \\  
			& \exists~ 1 \leq k \leq |Pe_2|-|Pe_1|+1~s.t.~\forall~1\leq d \leq |Pe_1|:Pe_1^c[d]=Pe_2^c[k+d-1]\\
			&\Rightarrow for~ i_2=k+i_1-1,~\forall~0\leq j \leq L-1:Pe_2^c[i_2+j]=0 \\
			& \Rightarrow \UCP{L}{Pe_2^c}=1
		\end{align*}
\end{proof}}

\ifsubmission{
	We show that this protocol is monotonic and inspect its efficiency in Appendix~\ref{app:proofs2}.
	We now evaluate its robustness. }
\else{We first show that this protocol is monotonic, before evaluate its efficiency.
	\begin{proposition}
		The $L$ Consecutive Blocks protocol is monotonic.
	\end{proposition}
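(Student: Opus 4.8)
The plan is to show that the block-run that witnesses $\UCP{L}{Pe_1^c}=1$ persists, unchanged and still consecutive, once we view it inside $Pe_2$. So the proof is essentially an index-shifting argument, with one conceptual point about what ``$Pe_1\subseteq Pe_2$'' really means.

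First I would unpack the hypothesis. By the definition of the $L$ Consecutive Blocks protocol, $\UCP{L}{Pe_1^c}=1$ gives an index $i_1$ with $1\le i_1\le |Pe_1|-L+1$ such that $Pe_1^c[i_1+j]=0$ for all $0\le j\le L-1$; i.e.\ $Pe_1$ contains a run of $L$ consecutive uncongested blocks beginning at position $i_1$. Next I would make precise the inclusion $Pe_1\subseteq Pe_2$: since a period is by definition a sequence of \emph{consecutive} blocks of the chain, the inclusion forces $Pe_1$ to appear as a contiguous window of $Pe_2$, so there is an offset $k$ with $1\le k\le |Pe_2|-|Pe_1|+1$ and $Pe_1[d]=Pe_2[k+d-1]$ for every $1\le d\le |Pe_1|$. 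Because the congestion signal $\IsBlockCongested{}{\mathnormal{b}}$ of a block depends only on the block itself, this also gives $Pe_1^c[d]=Pe_2^c[k+d-1]$ for all such $d$.

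Finally I would combine the two. Put $i_2=k+i_1-1$. From $k\ge 1$ and $i_1\ge 1$ one gets $i_2\ge 1$, and from $i_1\le |Pe_1|-L+1$ together with $k\le |Pe_2|-|Pe_1|+1$ one gets $i_2+L-1\le |Pe_2|$, so $i_2$ is a legal starting position in $Pe_2$, namely $1\le i_2\le |Pe_2|-L+1$. Then for every $0\le j\le L-1$ we have $Pe_2^c[i_2+j]=Pe_1^c[i_1+j]=0$. Hence $Pe_2$ contains a run of $L$ consecutive uncongested blocks, which by definition means $\UCP{L}{Pe_2^c}=1$; this is exactly the monotonicity condition.

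The only real obstacle is the second step: being explicit that set-style inclusion of periods (``every block of $Pe_1$ lies in $Pe_2$'') in fact yields a contiguous, order-preserving embedding $d\mapsto k+d-1$. This is where the ``consecutive'' clause in the definition of a period is genuinely used — without it one could not shift a run of $L$ consecutive blocks of $Pe_1$ to a run of $L$ consecutive blocks of $Pe_2$. Everything after that is routine arithmetic on the index bounds.
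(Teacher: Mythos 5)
Your proof is correct and follows essentially the same route as the paper's: extract the witness index $i_1$ for the run in $Pe_1$, use the contiguous embedding $d\mapsto k+d-1$ given by $Pe_1\subseteq Pe_2$, and shift the run to start at $i_2=k+i_1-1$ in $Pe_2$. Your explicit verification of the index bounds and of why period inclusion yields a contiguous, order-preserving embedding is a welcome addition of detail, but not a different argument.
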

	\proofconscuivemonotone}
\fi

\newcommand{\LconsecEffInfo}{
	We now evaluate its efficiency.
	\paragraph{Proof size.}
	In order to provide evidence for the uncongestion of period $Pe$ of size $n$, it is enough to point to the location of the first block in a series of $L$ consecutive uncongested blocks, given formally by $\pi=\min\{i\in\{1,...,n-L+1\} \mid \forall~0\leq j \leq L-1:Pe^c[i+j]=0\}$.

	\paragraph{Refresh Information.}
	Given a congested period $Pe$, and $\hat{Pe}$ that extends it,
	in order to determine the congestion level of the extended period $\UCP{L}{\hat{Pe^c}}$, it is enough to check only $\hat{Pe} \setminus\left( Pe[:-(L-1)]\right)$, the period beginning $L$-1 blocks before $Pe$ ends.
}
\ifsubmission{
}\else{\LconsecEffInfo
}
\fi

\subsubsection*{Evaluation of the Robustness of the $L$ Consecutive Blocks protocol.}

We examine situations where the adversary attempts to manipulate the congestion signal for a given period. We separate this into two attacks: uncongestion and congestion attacks (as in Definition~\ref{def:attacks}).
We strive to achieve a high defense rate against both attacks, meaning finding a value $L$ that will give a low probability for an adversary to succeed in each of the attacks separately.

\paragraph{Evaluation of the Uncongestion Attack.}
In order to compute the probability of an attacker to successfully manipulate $Pe$ into an uncongested period, we define the following matrix $T_{(L+1)\times(L+1)}$:
\begin{equation}
	\forall~0 \leq i,j\leq L:~~ T_{i,j} =
	\begin{cases*}
		(1-\alpha)\cdot p & if $j = 0 \wedge i\neq L$ \\
		\alpha + (1-\alpha)\cdot (1-p) & if $j = i+1$ \\
		1 & if $i = j = L$ \\
		0        & otherwise
	\end{cases*}
\end{equation}

and denote by $e_i$ the $i^{th}$ unit vector of dimension $L$+1 (i.e., $e_i$ has a 1 in the $i^{th}$ coordinate and 0's elsewhere).
\begin{theorem}
	The probability of an attacker with a relative computational power $\alpha$ to successfully manipulate $Pe$ into an uncongested period in a p-congested network
	equals    $e_1 \cdot T^n \cdot e_{L+1}^t$.
\end{theorem}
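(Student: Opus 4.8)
The plan is to reduce the event ``the adversary wins the uncongestion attack on $Pe$'' to the classical event that an i.i.d.\ binary string of length $n$ contains a run of $L$ consecutive zeros, and then to recognize $T$ as the transition matrix of the standard absorbing Markov chain that detects such a run, so that the answer is the $(\text{state }0,\text{state }L)$ entry of $T^n$, i.e.\ $e_1\cdot T^n\cdot e_{L+1}^t$.

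First I would dispense with the $\exists\,\hat{Pe}\in S_{\overline{m},Pe}$ quantifier using monotonicity of the predicate $\UCP{L}{\cdot}$ in each coordinate: flipping a coordinate of a congestion vector from $1$ to $0$ can only create or extend runs of $0$'s, hence can only change $\UCP{L}{\cdot}$ from $0$ to $1$, never the reverse. Therefore the adversary wins iff the single vector obtained by setting \emph{all} controlled coordinates ($\overline{m}[i]=1$) to $0$ is uncongested, i.e.\ iff there is a window of $L$ consecutive indices each of which is either adversary-controlled or honestly uncongested. I would then note that, index by index, the relevant events are i.i.d.: block $i$ is mined by the adversary independently with probability $\alpha$, and conditioned on being honestly mined it is congested independently with probability $p$ (the $p$-congested assumption, independent of the mining process). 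Call block $i$ a \emph{blocker} if it is honestly mined and congested---probability $r:=(1-\alpha)p$---and \emph{clear} otherwise---probability $1-r=\alpha+(1-\alpha)(1-p)$. By the reduction above, the adversary wins exactly when the length-$n$ i.i.d.\ blocker/clear sequence contains $L$ consecutive clear blocks.

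Next I would encode the first appearance of such a run by the usual Markov chain on states $\{0,1,\dots,L\}$: for $i<L$, state $i$ means ``the current suffix of clear blocks has length exactly $i$ and no length-$L$ run has occurred yet,'' and state $L$ is absorbing (``a length-$L$ run has occurred''). Reading a blocker sends state $i<L$ to $0$ with probability $r$; reading a clear block sends $i<L$ to $i+1$ with probability $1-r$; state $L$ stays put. Ordering the coordinates as $0,1,\dots,L$, this transition matrix is exactly the stated $T$ (one checks $T_{i,0}=(1-\alpha)p$ for $i\neq L$, $T_{i,i+1}=\alpha+(1-\alpha)(1-p)$, $T_{L,L}=1$, zero elsewhere, and that rows sum to $1$). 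Processing the $n$ blocks is $n$ steps of this chain started in state $0$ (empty suffix), so the win probability is $\bigl(T^n\bigr)$ evaluated from state $0$ to state $L$, which is precisely $e_1\cdot T^n\cdot e_{L+1}^t$.

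I expect the two points that need care, rather than any hard computation, to be (i) the monotonicity reduction that makes ``set all controlled blocks to $0$'' without loss of generality, and (ii) an off-by-one check on the chain: starting in state $0$ before any block is read, after $n$ transitions the chain sits in state $L$ iff some window among the $n$ blocks is an all-clear run of length $\ge L$, which matches $\UCP{L}{\cdot}$ exactly (in particular $n<L$ correctly yields probability $0$). The i.i.d./independence claim in the second step is immediate from the model but must be stated explicitly, since it is what makes the chain time-homogeneous with matrix $T$.
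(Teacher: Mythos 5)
Your proof is correct and takes essentially the same route as the paper: the same absorbing Markov chain on states $0,\dots,L$ counting the current run of uncongested blocks, with the same per-block probabilities $(1-\alpha)p$ and $\alpha+(1-\alpha)(1-p)$, yielding $e_1\cdot T^n\cdot e_{L+1}^t$. The only difference is that you explicitly justify, via monotonicity of $\UCP{L}{\cdot}$, that the optimal adversary strategy is to set every controlled block to uncongested --- a step the paper's proof leaves implicit --- which is a welcome addition but not a different argument.
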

\begin{proof}
	We note that, at each block, the attacker has a probability $\alpha$ to mine the next block, which allows them to decide its congestion level. In this context, this means setting the block to be uncongested. In addition, the congestion signal of a block not mined by the attacker depends on the prevailing congestion state which is expressed by $p$. The probability of an honest block being congested, resp. uncongested, is hence equal to $(1-\alpha)\cdot p$, resp.  $\alpha+(1-\alpha)\cdot(1-p)$. We define the following Markov chain that describes a random walk on $Pe$'s blocks and whose states represent the number of consecutive blocks that are uncongested at a point in time.
	
	\begin{figure}[ht]
		\centering
		\includegraphics[scale=0.6, trim={200cm 0.8cm 200cm .5cm}]{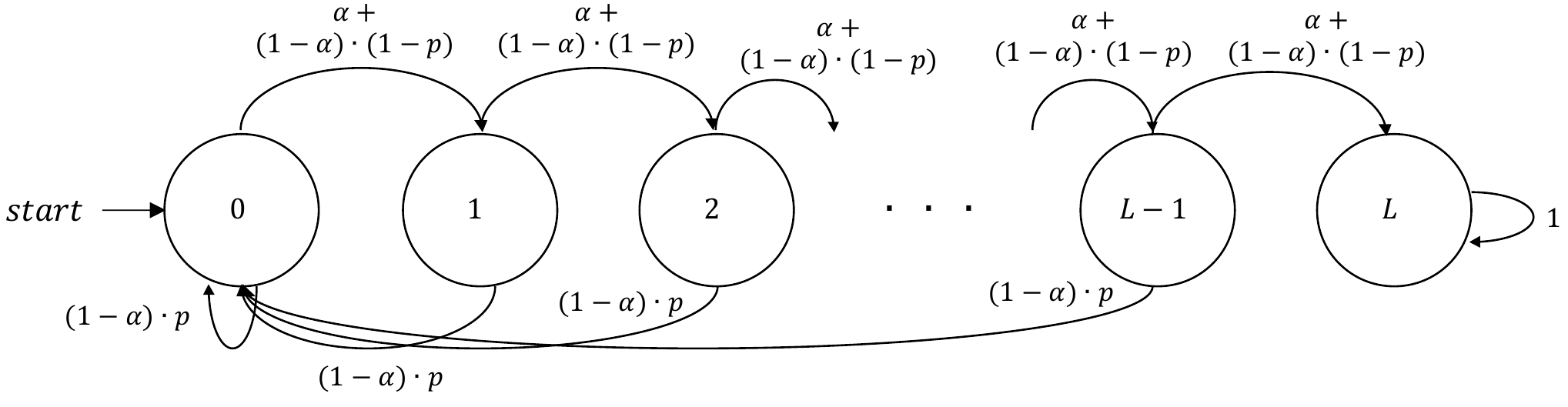}
	\end{figure} \label{fig:markov1}
	
	The initial state is $0$ since it corresponds to the 0 consecutive uncongested blocks at the beginning of the walk.
	With each step, we move from state $i$ to state $i+1$, for $i<L$, if the block is uncongested, and return to state $0$ if it is not.
	If we reach state $L$, we stay there since it means the adversary has reached the goal of $L$ consecutive uncongested blocks in $Pe$ and can manipulate it to an uncongested period.

	T is the corresponding transition matrix; hence the probability of reaching state $L$ in $|Pe|=n$ steps is expressed by $e_1 \cdot T^n \cdot e_{L+1}^t$.
\end{proof}

\paragraph{Evaluation of the Congestion Attack.}
For the attack in the opposite direction we 
define $\hat{T}_{(L+1)\times(L+1)}$ as follows:
\begin{equation}
	\forall~0 \leq i,j\leq L:~~ \hat{T}_{i,j} =
	\begin{cases*}
		\alpha + (1-\alpha) \cdot p & if $j = 0 \wedge i\neq L$ \\
		(1-\alpha) \cdot (1-p) & if $j = i+1$ \\
		1 & if $i = j = L$ \\
		0        & otherwise
	\end{cases*}
\end{equation}

\begin{theorem}
	The probability of an attacker with a relative computational power $\alpha$ to successfully manipulate $Pe$ into a congested period, in a p-congested network
	equals    $1-e_1 \cdot \hat{T^n} \cdot e_{L+1}^t$.
\end{theorem}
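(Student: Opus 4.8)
The plan is to mirror the argument used for the uncongestion attack, but now tracking the event in which the adversary \emph{fails}, and then complementing. First I would pin down the adversary's optimal strategy. Since the $L$ Consecutive Blocks protocol is monotone, and flipping a block's signal from uncongested to congested can only destroy runs of consecutive uncongested blocks and never create one, among all manipulations in $S_{\overline{m},Pe}$ the one that sets every adversary-controlled block to \emph{congested} minimizes $\IsPeriodUnCongested$. Hence there exists $\hat{Pe}\in S_{\overline{m},Pe}$ with $\UCP{L}{\hat{Pe^c}}=0$ if and only if the single, fixed ``all-controlled-blocks-congested'' manipulation already yields $\UCP{L}{\cdot}=0$. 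This collapses the existential over manipulations in Definition~\ref{def:con_rob} to the evaluation of one fixed congestion vector.

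Next I would describe the distribution of that fixed vector. For each index $i$, independently over $\overline{m}\sim B(n,\alpha)$ and $Pe^c\sim B(n,p)$, the manipulated block ends up congested whenever the adversary controls it (probability $\alpha$) or it is an honest block that happens to be congested (probability $(1-\alpha)p$), and uncongested only when it is honest and uncongested (probability $(1-\alpha)(1-p)$). So the manipulated congestion vector is an i.i.d.\ Bernoulli sequence whose ``uncongested'' probability is $(1-\alpha)(1-p)$ and whose ``congested'' probability is $\alpha+(1-\alpha)p$, exactly the per-block probabilities appearing in $\hat{T}$.

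Then I would set up the Markov chain with states $0,1,\dots,L$, where state $i<L$ records a current run of $i$ consecutive uncongested blocks and state $L$ is absorbing: reading one block either extends the run (move $i\to i+1$, probability $(1-\alpha)(1-p)$) or resets it (move $i\to 0$, probability $\alpha+(1-\alpha)p$), and once state $L$ is reached a run of $L$ consecutive uncongested blocks has occurred. The transition matrix is precisely $\hat{T}$, the walk starts in state $0$ (the vector $e_1$), and after $n=|Pe|$ steps the probability of being in the absorbing state $L$ is $e_1\cdot\hat{T}^n\cdot e_{L+1}^t$. Being in state $L$ after $n$ steps is equivalent to the manipulated period containing $L$ consecutive uncongested blocks, i.e.\ to $\UCP{L}{\hat{Pe^c}}=1$, which by the first step is exactly the event that the adversary \emph{fails} the congestion attack. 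Complementing, the adversary succeeds with probability $1-e_1\cdot\hat{T}^n\cdot e_{L+1}^t$.

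The main obstacle is the first step: carefully justifying that the existential quantifier over $S_{\overline{m},Pe}$ collapses to the single all-congested manipulation, i.e.\ that the adversary never gains by leaving one of its controlled blocks uncongested. Everything after that is the same bookkeeping as in the uncongestion-attack theorem, with the roles of ``congested'' and ``uncongested'' interchanged and with the final probability complemented because reaching the absorbing state now corresponds to the defender winning rather than the attacker.
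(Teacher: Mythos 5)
Your proposal is correct and follows essentially the same route as the paper: model the manipulated congestion vector as i.i.d.\ with per-block uncongestion probability $(1-\alpha)(1-p)$, run the Markov chain with transition matrix $\hat{T}$ that tracks the current run of consecutive uncongested blocks, and complement the absorption probability since reaching state $L$ now means the attacker fails. Your explicit justification that the existential over $S_{\overline{m},Pe}$ collapses to the all-controlled-blocks-congested manipulation is a detail the paper leaves implicit, and it is a welcome addition rather than a departure.
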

\newcommand{\prooflconsecutivecongestion}{
	\begin{proof}
		This time, if the attacker succeeds in mining a block, they will make it congested. Therefore the probability for a block to be uncongested is $(1-\alpha)\cdot(1-p)$. 
		As before, we define the following Markov chain whose states represent the number of consecutive blocks that are uncongested at a point in time in $Pe$:
		
		\begin{figure}[ht]
			\centering
			\includegraphics[scale=0.6, trim={200cm 0.8cm 200cm .5cm}]{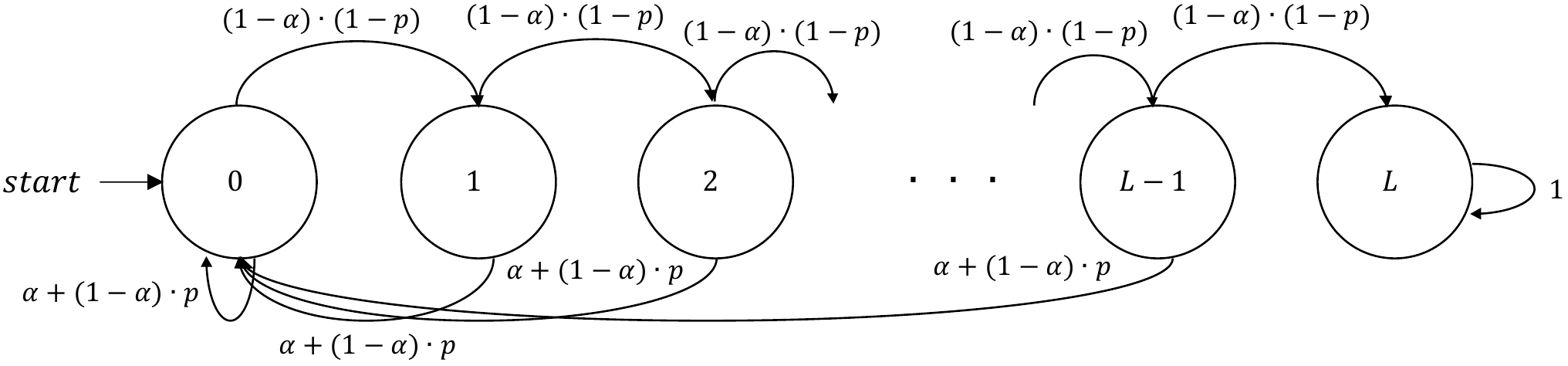}
		\end{figure} \label{fig:markov2}
		
		$\hat{T}$ is the corresponding transition matrix.
		Therefore, the probability for the adversary to succeed in the congestion attack is equivalent to the probability that the rest of the miners will not reach the $L$ state in $n$ steps, which is expressed by: $1-e_1 \cdot \hat{T^n} \cdot e_{L+1}^t$.
\end{proof}}
\prooflconsecutivecongestion

Now that we have the attacks' success rates, we examine the robustness of the protocol against both attacks for different values of $L$.

Although attacks are potentially expensive for the adversary (who needs to change the contents of its block and, hence, loses transaction fees), we still desire a low success probability for the attack even for strong attackers. 
We assume in the following evaluations that the attacker controls $33\%$ of the computational power.

Given that congestion may cause period extension, we need a value for $L$ that gives protection also against attacks over longer periods. We examine the behavior of the protocol for periods as long as $\hat{M}$ using different values for $L$.

The value $p$ should represent realistic network conditions. For our analysis we pick $p=0.85$ when studying the uncongestion attack, to simulate more congested settings, or $p=0.15$ when studying the congestion attack, to simulate relatively uncongested settings. 
Other values can be plugged in, if needed, for other conditions. We start by examining the robustness of the protocol for a period of 1 day.

Figures~\ref{fig:consec-eth}-\ref{fig:consec-bitcoin} present 
the probability of success in both attacks for two different period lengths: 6450 blocks in Figure~\ref{fig:consec-eth} and 144 blocks in Figure~\ref{fig:consec-bitcoin}. These periods correspond, roughly, to a single day in Ethereum and a single day in Bitcoin. The red curves correspond to the congestion attack and the blue curves to the uncongestion attack. 
We compute these probabilities for different values of $L$.

\begin{figure}[ht]
	\begin{subfigure}[t]{.5\linewidth}
		\includegraphics[scale=0.38]{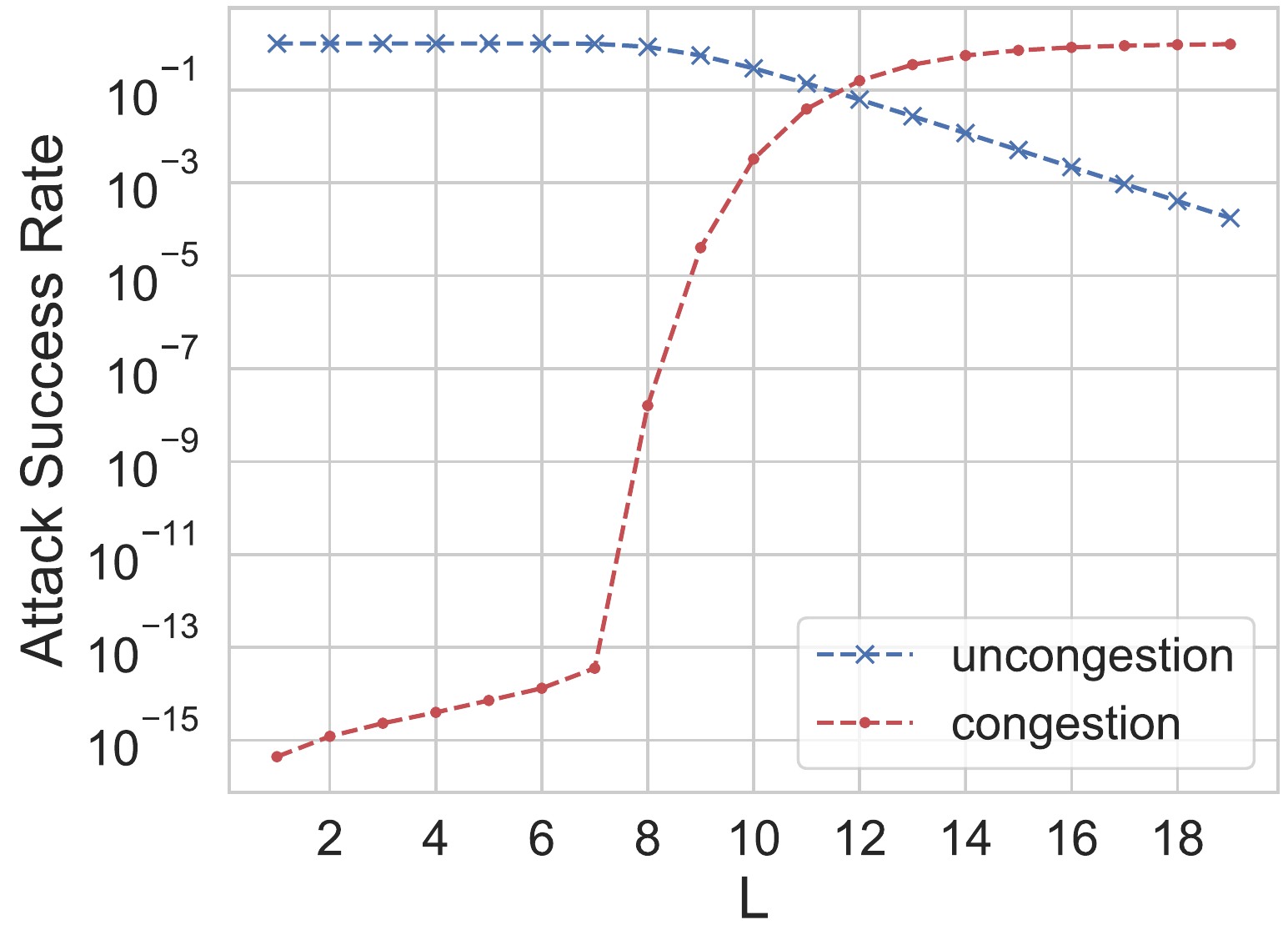}
		\caption{$|Pe|=6450\sim 1$ day in Ethereum}
		\label{fig:consec-eth}
	\end{subfigure}
	\label{fig:consec}
	\begin{subfigure}[t]{.5\linewidth}
		\includegraphics[scale=0.38] {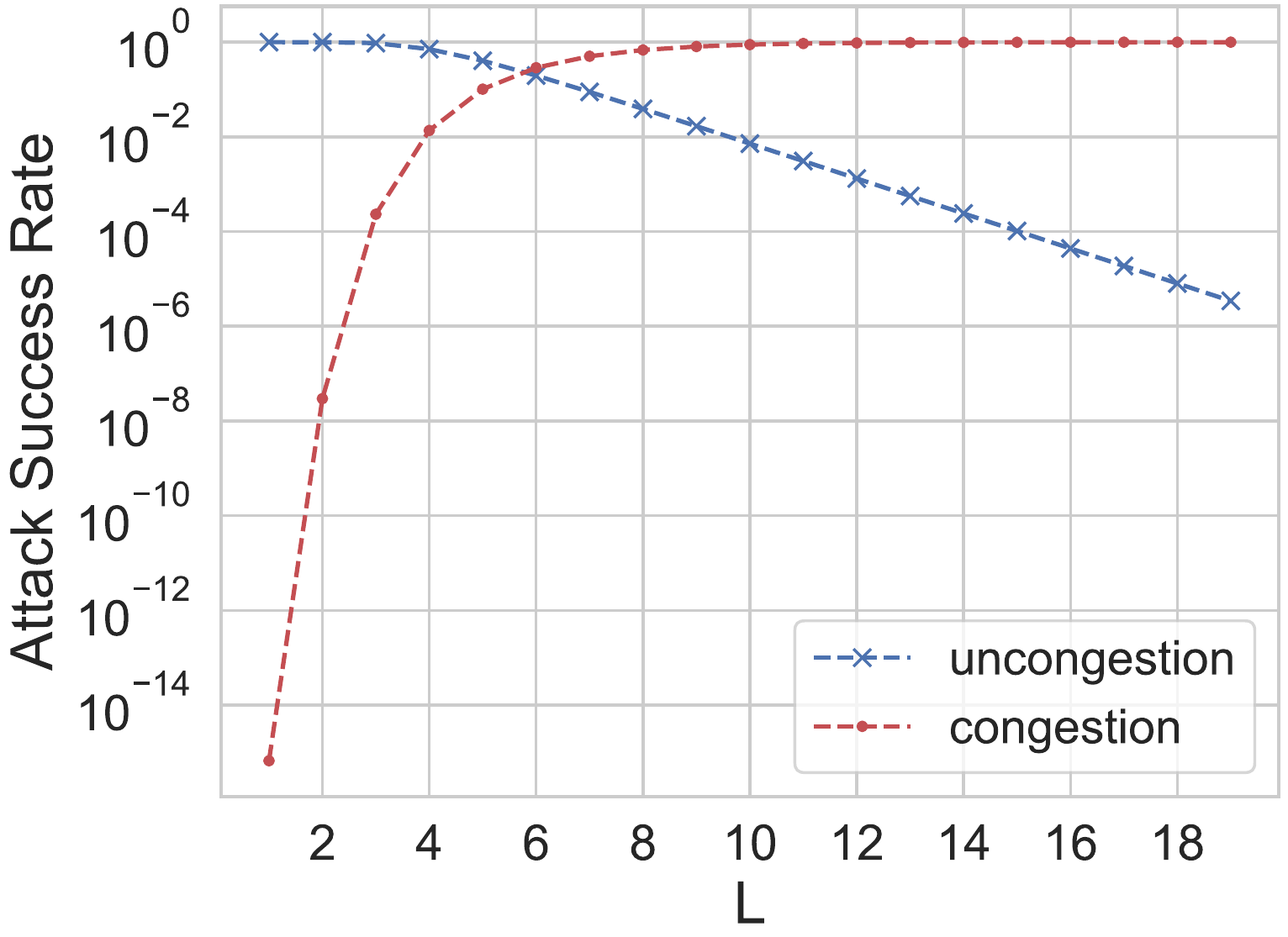}
		\caption{$|Pe|=144\sim 1$ day in Bitcoin} 
		\label{fig:consec-bitcoin}
	\end{subfigure}
	\caption{Attack success rate as a function of $L$, for $\alpha=0.33$}
\end{figure}

The results in both figures show there is no value $L$ that gives a probability of success less than $1\%$ for both attacks. Formally, it shows that the $L$ Consecutive Blocks protocol cannot be simultaneously $(0.33, 0.15, 0.01, 1~day)$-congestion~robust and
$(0.33, 0.85, 0.01, 1~day)$-uncongestion~robust.
Therefore,
we find the $L$ Consecutive Blocks protocol not sufficiently secure. Intuitively, this is because more robust estimates of congestion are typically obtained over longer observation windows. The $L$ Consecutive Blocks protocol obtains longer observations if $L$ is increased, but then the requirement for consecutive blocks to be uncongested is too strict and is not robust. As a result of this insight, we propose a new protocol that generalizes the $L$ Consecutive Blocks protocol and allows for longer observation windows with a relaxed condition for uncongestion.

\subsection{\textbf{Sliding Window (K-out-of-N) Protocol}}\label{sliding}

\begin{definition}[K-out-of-N Sliding Window] A period $Pe$ is uncongested if there exists a period $\hat{Pe}$ of length $N$ included in it in which at least $K$ blocks are uncongested.
	
	$$\UCP{SW}{Pe^c}=1 \leftrightarrow\left( ~\exists~\hat{Pe}\subseteq Pe: |\hat{Pe}|=N \wedge \left(\sum_{b\in \hat{Pe}} (1-\IsBlockCongested{}{\mathnormal{b}})\geq K\right)\right)$$
\end{definition}

We note that the $L$ Consecutive Blocks protocol is a special case in which $L=N=K$.
\newcommand{\slidingwindowmonotonicity}{
	\begin{newProposition}{4}
		The Sliding Window protocol is monotonic.
	\end{newProposition}
	\begin{proof}
		Given an uncongested period $Pe_1$, according to the Sliding Window protocol, which is included in period $Pe_2$:
		\begin{align*}
			& \UCP{SW}{Pe_1^c}=1\Rightarrow \left(\exists~\hat{Pe}\subseteq Pe_1: |\hat{Pe}|=N \wedge \left(\sum_{b\in \hat{Pe}} \IsBlockCongested{}{\mathnormal{b}}\geq K\right)\right)\\
			& Pe_1\subseteq Pe_2 \Rightarrow  \left(\hat{Pe}\subseteq Pe_2\right) \wedge \left(\sum_{b\in \hat{Pe}} \IsBlockCongested{}{\mathnormal{b}}\geq K\right)\\
			&\Rightarrow \UCP{SW}{Pe_2^c}=1
		\end{align*}
\end{proof}}
\slidingwindowmonotonicity

We now evaluate its efficiency.
\paragraph{Proof Size.}
In order to provide evidence for the uncongestion of a period $Pe$ of size $n$, it is enough to point to a window in which uncongestion occurs. Formally, to present $\pi=i \in \{1,...,n-K+1\}$ s.t. $\sum_{l=i}^{i+N}(1-\IsBlockCongested{}{\mathnormal{Pe[l]}})\geq K$.

\paragraph{Refresh Information.}
Given a congested period $Pe$, and $\hat{Pe}$ that extends it, 
in order to determine the congestion level of the extended period $\UCP{SW}{\hat{Pe^c}}$, it is enough to check only windows that overlap blocks in $\hat{Pe} \setminus Pe$.

\subsubsection*{\textbf{Evaluation of the Sliding Window Protocol's Robustness.}}
We consider the two attacks in Definition~\ref{def:attacks}.
We first note that the two attacks may differ in their consequences. While the congestion attack can cause a delay in the response deadline (i.e., a deadline will be extended even if it is not really needed), the uncongestion attack might lead participants to miss the chance to respond on time,
as the deadline will not be extended even if the network is congested. The damage in each case depends on the particular use case. 
For example, in the case of payment channels, not responding in time is more severe and may lead to financial losses. 
We strive to achieve a high level of security against both types of attack, i.e., to find values for parameters ($N, K$) that will yield a low probability of success for both.

We begin by presenting upper bounds on the probabilities of success in each of the attacks.
\begin{theorem}
	\label{thm:bound_un}
	The probability of an attacker with a relative computational power $\alpha$ to successfully manipulate $Pe$ into an uncongested period, in a p-congested network, is bounded above by $(n-N+1)\cdot \sum_{j=K}^{N}\binom{N}{j}\cdot q^{j}\cdot(1-q)^{N-j}$, for $q=\alpha+(1-p)\cdot(1-\alpha)$.
\end{theorem}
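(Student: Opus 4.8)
The plan is to bound the adversary's success probability by a union bound over all possible length-$N$ windows inside $Pe$, and within each window, to reduce the event "the adversary can make this window have $\geq K$ uncongested blocks" to a clean binomial tail by observing that the relevant per-block success probability is exactly $q = \alpha + (1-p)(1-\alpha)$.

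First I would fix a length-$N$ sub-window $W \subseteq Pe$. There are exactly $n - N + 1$ such windows. By the definition of $\UCP{SW}{\cdot}$, the adversary wins the uncongestion attack iff \emph{some} window $W$ ends up with at least $K$ uncongested blocks after manipulation; so by the union bound it suffices to show that for each individual window, the probability it can be made to contain $\geq K$ uncongested blocks is at most $\sum_{j=K}^{N}\binom{N}{j} q^{j}(1-q)^{N-j}$, and then multiply by $n-N+1$.

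Next, within a fixed window $W$, I would analyze block by block. For each block $Pe[i] \in W$, consider the indicator that the \emph{manipulated} block is uncongested. The adversary controls block $i$ (i.e.\ $\overline m[i]=1$) with probability $\alpha$, independently across blocks, and in that case sets it uncongested; otherwise the honest block is uncongested with probability $1-p$ (since $Pe^c \sim B(n,p)$ means an honest block is congested with probability $p$), independently. Hence each block of the window is "uncongested after manipulation" independently with probability $q = \alpha + (1-\alpha)(1-p)$. Crucially, the adversary's best strategy here is the greedy one — whenever it controls a block it should set it uncongested — so the number of uncongested blocks achievable in $W$ is stochastically dominated by (in fact equal in distribution to) $B(N,q)$. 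Therefore $P_r(\text{window } W \text{ can be made to have} \geq K \text{ uncongested blocks}) = P_r(B(N,q) \geq K) = \sum_{j=K}^{N}\binom{N}{j} q^{j}(1-q)^{N-j}$. Combining with the union bound over the $n-N+1$ windows gives the stated bound.

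The main subtlety — and where I'd be most careful — is the dependence between overlapping windows and the fact that the manipulation mask $\overline m$ is shared across all windows: the union bound sidesteps this cleanly, but one must be explicit that the union bound only needs the \emph{marginal} probability for each window, which is genuinely $B(N,q)$ regardless of how windows overlap. A secondary point worth stating is why the greedy adversary strategy is optimal for this one-sided attack: since $\UCP{SW}$ is monotone in the number of uncongested blocks, making every controlled block uncongested can only help, so no adaptivity or coordination across blocks buys the adversary anything. With those two observations in place the proof is just the union bound plus the binomial-tail identity.
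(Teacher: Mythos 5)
Your proposal is correct and follows essentially the same route as the paper's proof: a per-block uncongestion probability of $q=\alpha+(1-\alpha)(1-p)$, a binomial tail $\sum_{j=K}^{N}\binom{N}{j}q^{j}(1-q)^{N-j}$ for each of the $n-N+1$ windows, and a union bound over the windows. Your added remarks on the optimality of the greedy adversary and on the union bound needing only marginal window probabilities are sound elaborations of steps the paper leaves implicit, not a different argument.
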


\newcommand{\proofslidingwindowuncongestion}{
	\begin{proof}
		The probability for a block to be uncongested during this attack is $q=\alpha+(1-p)\cdot(1-\alpha)$.
		In a period of size $n$, there are $n-N+1$ different sliding windows. We denote by $A_i$ the event in which there are $K$ out of $N$ uncongested blocks in the $i^{th}$ sliding window.
		Therefore, the probability of a single sliding window being uncongested is $P(A_i)=\sum_{j=K}^{N}\binom{N}{j}\cdot q^{j}\cdot(1-q)^{N-j}$. 
		To succeed in the uncongestion attack, at least one of the sliding windows has to be uncongested, which is expressed by $P(\cup_{i=1}^{n-N+1}A_i)$.
		We use the union bound to bound this probability and get: \\ $P(\cup_{i=1}^{n-N+1}A_i) \leq \sum_{i=1}^{n-N+1}P(A_i) = (n-N+1)\cdot \sum_{j=K}^{N}\binom{N}{j}\cdot q^{j}\cdot(1-q)^{N-j}$
\end{proof}}
\proofslidingwindowuncongestion
\begin{theorem}
	\label{thm:bound_co}
	The probability of an attacker with a relative computational power $\alpha$ to successfully manipulate $Pe$ into a congested period, in a p-congested network is bounded above by $(\sum_{j=0}^{K-1}\binom{N}{j}\cdot q^{j}\cdot(1-q)^{N-j})^{\lfloor \frac{n}{N} \rfloor}$, for $q=(1-p)\cdot(1-\alpha)$.
\end{theorem}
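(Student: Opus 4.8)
The plan is to reduce the attacker's success probability to a statement about i.i.d.\ block signals, and then to bound that probability by restricting attention to a family of \emph{disjoint} length-$N$ windows, on which the relevant events become independent.

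First I would record a monotonicity observation: for any window $\hat{Pe}$ of length $N$, the quantity $\sum_{b\in\hat{Pe}}(1-\IsBlockCongested{}{\mathnormal{b}})$ can only decrease when a block is flipped from uncongested to congested, so the verdict $\UCP{SW}(\cdot)$ can only move from $1$ toward $0$. The congestion attacker wins iff some reachable vector $\hat{Pe^c}\in S_{\overline{m},Pe}$ has $\UCP{SW}(\hat{Pe^c})=0$. Since the vector $v^\ast$ obtained by setting \emph{every} adversary-controlled block to congested dominates (pointwise, as a congestion signal) every reachable vector, it has the fewest uncongested blocks in every window; hence some reachable vector has verdict $0$ iff $v^\ast$ has verdict $0$. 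Therefore $P_r(\text{attacker wins}) = P_r(\UCP{SW}(v^\ast)=0)$. In $v^\ast$, block $i$ is uncongested iff it is honest (probability $1-\alpha$) and naturally uncongested (probability $1-p$); as the control bits $\overline m\sim B(n,\alpha)$ and the natural signals $Pe^c\sim B(n,p)$ are independent coordinatewise, the blocks of $v^\ast$ are uncongested independently, each with probability $q=(1-p)(1-\alpha)$.

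Next I would unfold $\UCP{SW}(v^\ast)=0$: it means every length-$N$ window inside the $n$-block period contains at most $K-1$ uncongested blocks. In particular this holds for the $\lfloor n/N\rfloor$ pairwise disjoint windows $W_t=(Pe[(t-1)N+1],\dots,Pe[tN])$, $t=1,\dots,\lfloor n/N\rfloor$. Because these windows occupy disjoint block sets and the block signals are independent, the events $E_t=\{W_t\text{ has at most }K-1\text{ uncongested blocks}\}$ are mutually independent, and since the number of uncongested blocks in $W_t$ is $B(N,q)$, each satisfies $P(E_t)=\sum_{j=0}^{K-1}\binom{N}{j}q^{j}(1-q)^{N-j}$. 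Consequently
\[
P_r(\text{attacker wins}) \;\le\; P_r\!\left(\bigcap_{t=1}^{\lfloor n/N\rfloor} E_t\right) \;=\; \prod_{t=1}^{\lfloor n/N\rfloor} P(E_t) \;=\; \left(\sum_{j=0}^{K-1}\binom{N}{j}\, q^{j}(1-q)^{N-j}\right)^{\lfloor n/N\rfloor},
\]
which is exactly the claimed bound.

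The only real subtlety — the step I would be most careful with — is the first one: justifying that ``make every controlled block congested'' is without loss of generality, i.e.\ that monotonicity lets us replace the existential quantifier over $S_{\overline m,Pe}$ by the single canonical vector $v^\ast$. Everything after that is union/intersection bookkeeping together with the elementary fact that counts over disjoint windows of i.i.d.\ bits are independent binomials; note also that the bound is unaffected by whether we keep or discard the leftover $n-N\lfloor n/N\rfloor$ blocks, since dropping windows only enlarges the event being bounded.
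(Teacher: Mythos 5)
Your proposal is correct and follows essentially the same route as the paper's proof: reduce to i.i.d.\ block signals with per-block uncongestion probability $q=(1-p)(1-\alpha)$, then bound the intersection over all windows by the intersection over the $\lfloor n/N\rfloor$ disjoint windows, whose events are independent. The only difference is that you explicitly justify (via monotonicity) that the adversary's optimal strategy is to congest every controlled block, a step the paper leaves implicit; this is a welcome tightening rather than a divergence.
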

\newcommand{\slidingwindowcongestion}{
	\begin{proof}
		The probability for a block to be uncongested is $q=(1-p)\cdot(1-\alpha)$.
		We denote by $B_i$ the event in which there are less than $K$ uncongested blocks in the $i^{th}$ sliding window.
		The probability of a single sliding window being congested is $P(B_i)=\sum_{j=0}^{K-1}\binom{N}{j}\cdot q^{j}\cdot(1-q)^{N-j}$. 
		To succeed in the congestion attack, all sliding windows in the period must be congested, which is expressed by $P(\cap_{i=1}^{n-N+1}B_i)$.
		We bound this probability by $P(\cap_{i=1}^{\lfloor \frac{n}{N} \rfloor}B_{N \cdot (i-1) + 1})$, i.e., we consider a subset of events $B_i$ that are independent from each other (removing overlapping windows). We compute the intersection of the pairwise independent events and get: $P(\cap_{i=1}^{n-N+1}B_i) \leq P(\cap_{i=1}^{\lfloor \frac{n}{N} \rfloor}B_{N \cdot (i-1) + 1}) = \prod_{i=1}^{\lfloor \frac{n}{N} \rfloor}P(B_{N \cdot (i-1) + 1}) = (\sum_{j=0}^{K-1}\binom{N}{j}\cdot q^{j}\cdot(1-q)^{N-j})^{\lfloor \frac{n}{N} \rfloor}$.
\end{proof}}
\slidingwindowcongestion

We would like to compute the robustness of the protocol for 1 day to 1 hour sliding windows.
We examine the situation where a period $Pe$ of size $n$ is chosen at random and the blockchain is $p-congested$ for values of $p=0.85$ (relatively congested) and $p=0.15$ (relatively uncongested) against an attacker with computational power $\alpha\leq0.33$.
In the evaluation, we allow periods to be extended up to two weeks, a reasonable time for congestion to pass. We set the $\hat{M}$-maximum extension (see Definition~\ref{def:M}) accordingly (90300 blocks in Ethereum and 2016 blocks in Bitcoin).

We first evaluate the attack over Ethereum, computing the above bounds for different sliding window sizes.
We begin with a sliding window of 1 day ($N=6450$), setting $K=\frac{N}{2} = 3225$. 
Figure~\ref{fig:sliding2w_upper_eth} presents the two upper bounds for the different possible period lengths $N\leq n \leq \hat{M}$. For the protocol to be considered secure, we need low values in both curves for the different period lengths (since periods might be extended).
As can be seen, the probabilities in the graph are extremely low, showing the protocol to be very secure. We emphasize that the blue curve is not horizontal, as shown in the graph; all of its values are smaller than $10^{-323}$. Note that these are only upper bounds; the actual probabilities are even lower.

\begin{figure}[ht]
	\centering
	\includegraphics[scale=0.43]{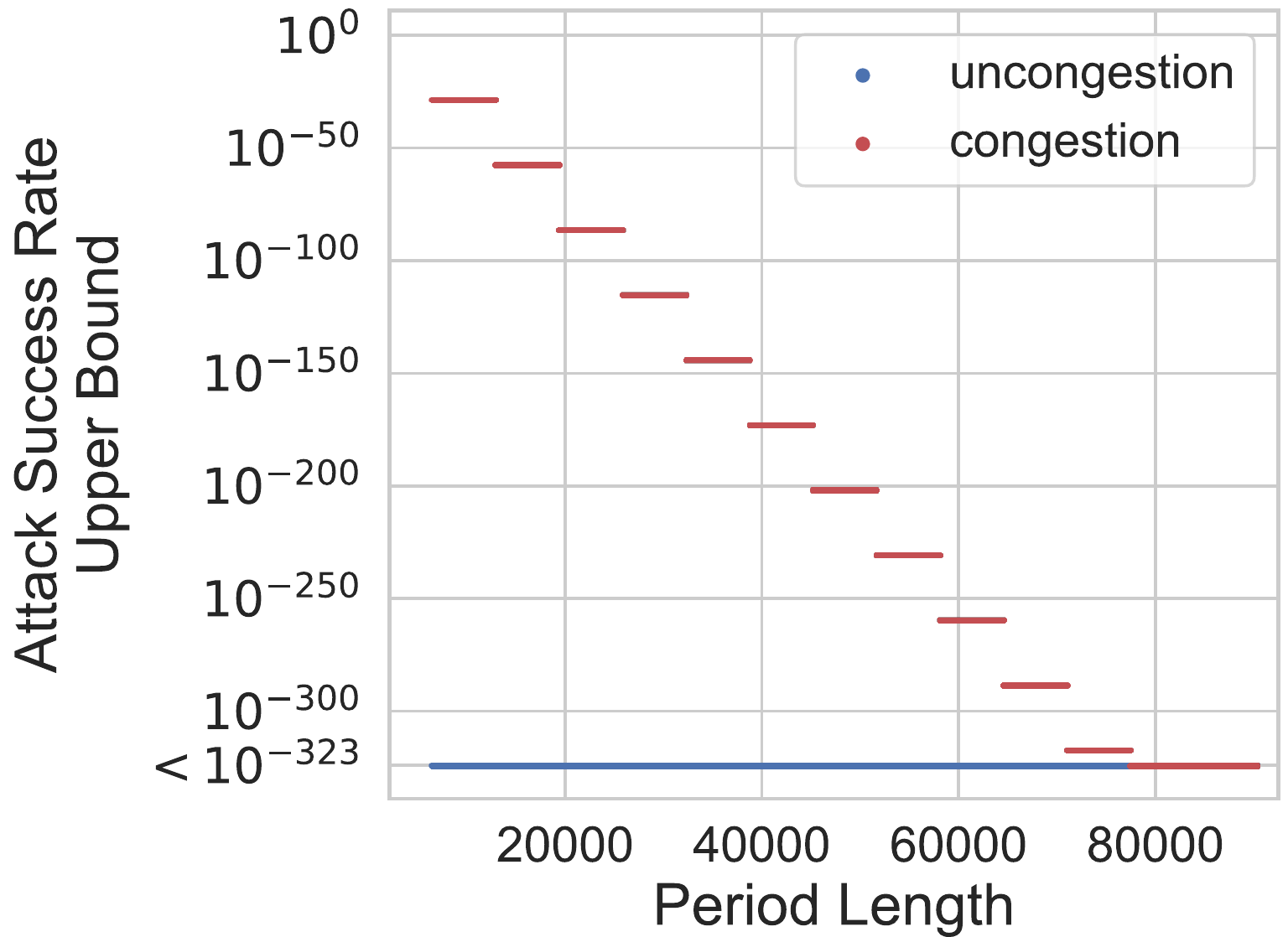}
	\caption{Upper bounds on the attacks' success rates as a function of the period length, for $\hat{M}=90300, N = 6450, K=3225, \alpha=0.33$}
	\label{fig:sliding2w_upper_eth}
\end{figure} 

We evaluate the attack for smaller sliding windows. The following table 
summarizes our results: 

\begin{center}
	\setlength{\tabcolsep}{10pt}
	\resizebox{9,2cm}{!}{%
		\begin{tabular}{ c  c  c  c }
			\hline
			\textbf{N} & \textbf{K} & \textbf{Uncongestion} & \textbf{Congestion}\\ 
			\hline
			6450 (1 day)   & 3225        & $<10^{-323}$   & $1.44\times 10^{-29}$  
			\\ 
			3225 (12 hours)   & 1612    & $1.26\times 10^{-10}$   & $8.06\times 10^{-16}$ 
			\\
			1612 (6 hours)   & 815       & $7.14\times 10^{-5}$   & $1.08\times 10^{-7}$ 
			\\	
			806 (3 hours)   & 421       & $8.87\times 10^{-3}$   & $3.16\times 10^{-3}$ 
			\\ \hline
		\end{tabular}
	}
\end{center}

The wider the sliding window is, the greater the protection.  
For smaller sliding windows, such as 1 hour ($N=269$), we can achieve a $99\%$ defense rate against each attack if we lower the attackers' computation power to $\alpha \leq 0.27$ (instead of 0.33).
We provide examples of $N, K$ values and the level of protection they provide (an upper bound), but these are configurable and subject to the user's discretion. One can choose to increase the level of protection from one attack at the expense of the other, or to set a larger initial period length ($>N$) to increase the protection.

Next, we want to know what happens with smaller periods such as in Bitcoin, which has longer block intervals. To do so, we set $\hat{M}=2016$ and begin with a sliding window of 1 day ($N=144$).

We use a simulation to draw 100,000 samples $Pe^c \sim B(\hat M,p)$ of congestion vectors and to compute the success rates of both attacks among the samples (Figures~\ref{fig:sliding2w}a,c,d). We use error plots to plot the standard error of the data; however, the errors are very small and therefore are almost invisible in the graphs. 

\begin{figure}[ht]
	\begin{subfigure}[t]{.5\linewidth}
		\centering
		\includegraphics[scale=0.39] {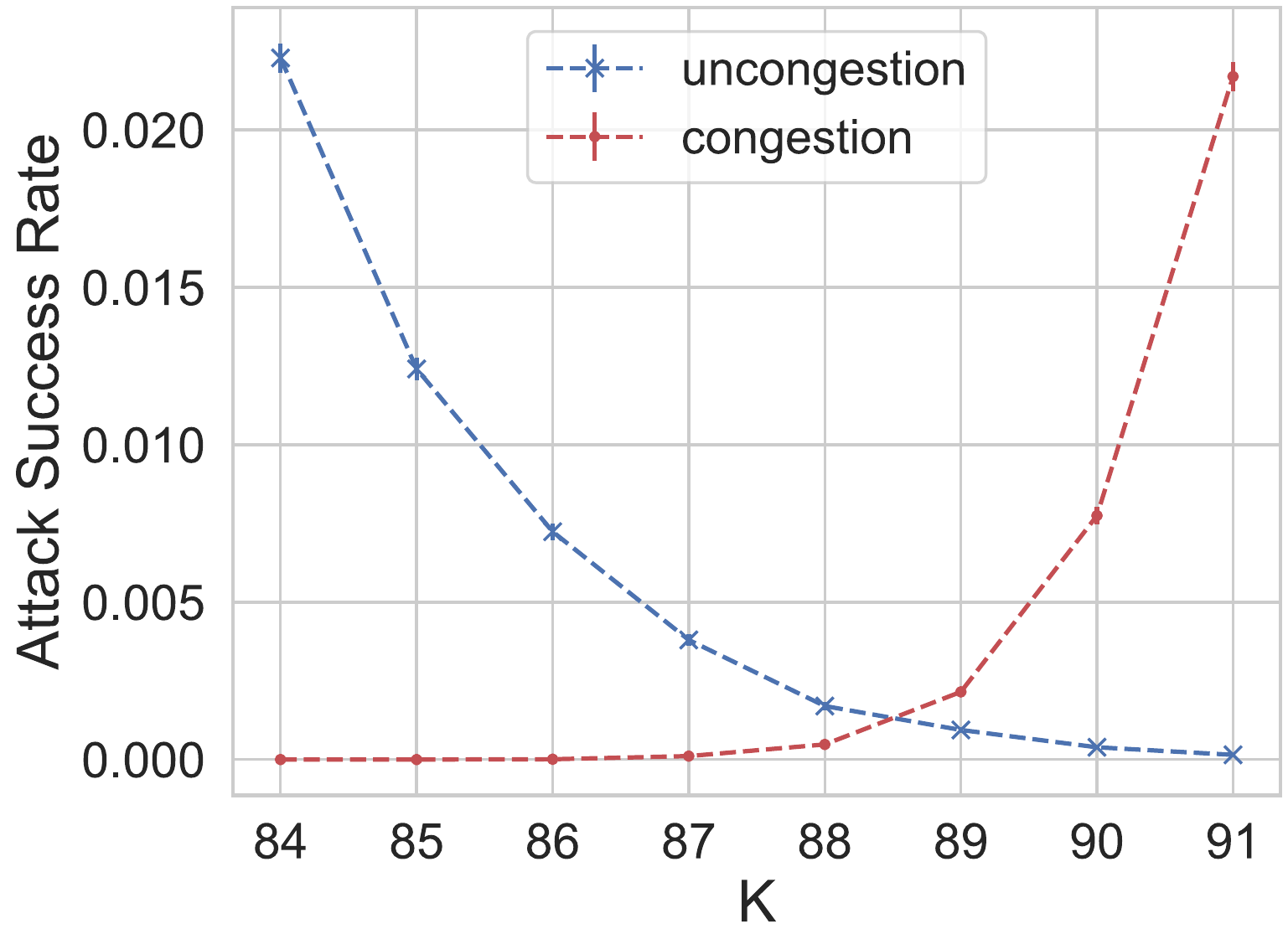}
		\caption{Attack success rate as a function of K, for $n = 2016, N = 144,\alpha=0.33$} 
		\label{fig:sliding2w-a}
	\end{subfigure}
	\hspace{1em}%
	\begin{subfigure}[t]{.5\linewidth}
		\centering
		\includegraphics[scale=0.38]{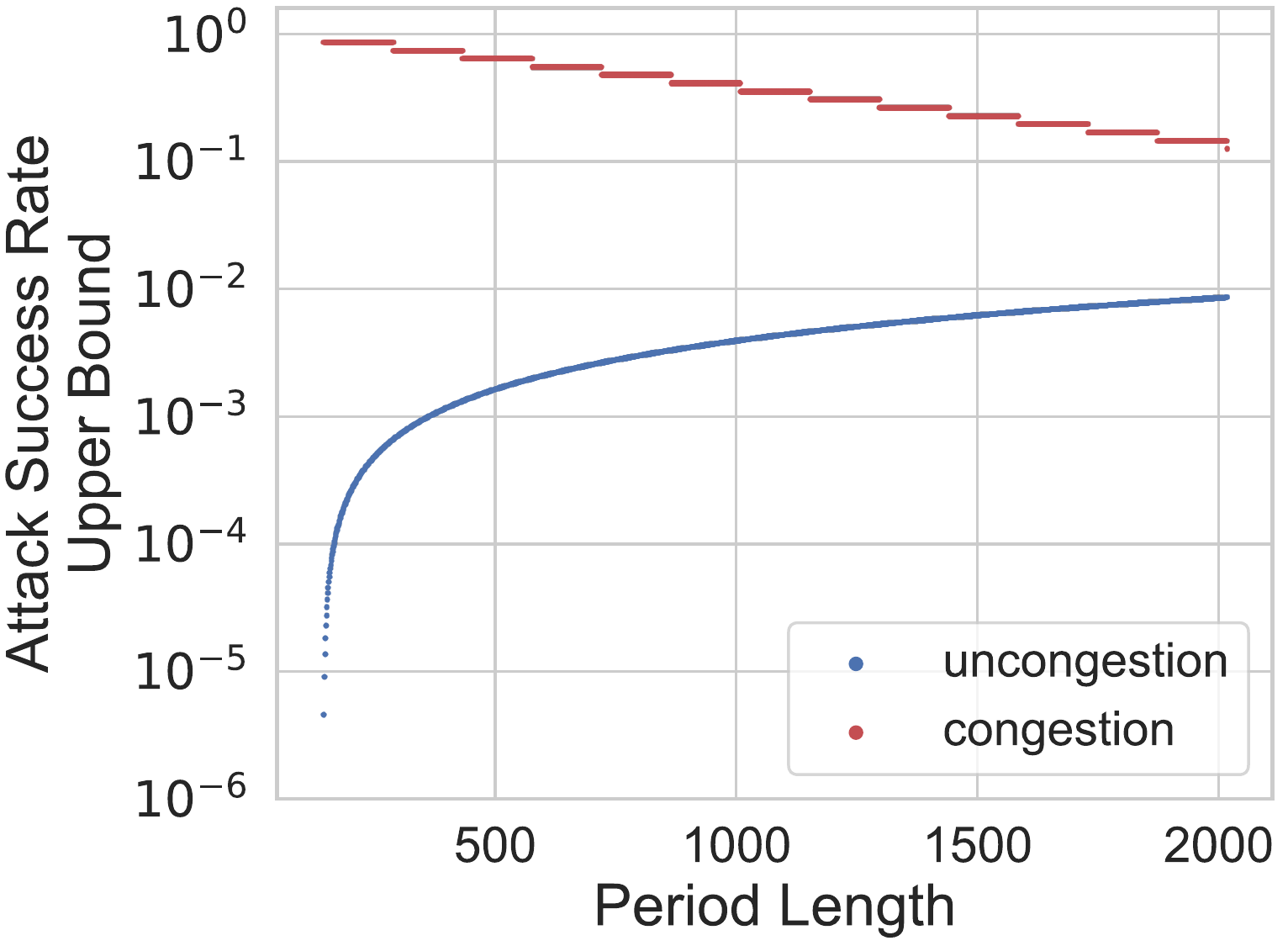}
		\caption{Upper bounds on the attacks' success rates as a function of the period length, for $N = 144, K=89, \alpha=0.33$}
		\label{fig:sliding2w-b}
	\end{subfigure}\\[1ex]
	\begin{subfigure}[t]{.5\linewidth}
		\centering
		\includegraphics[scale=0.37]{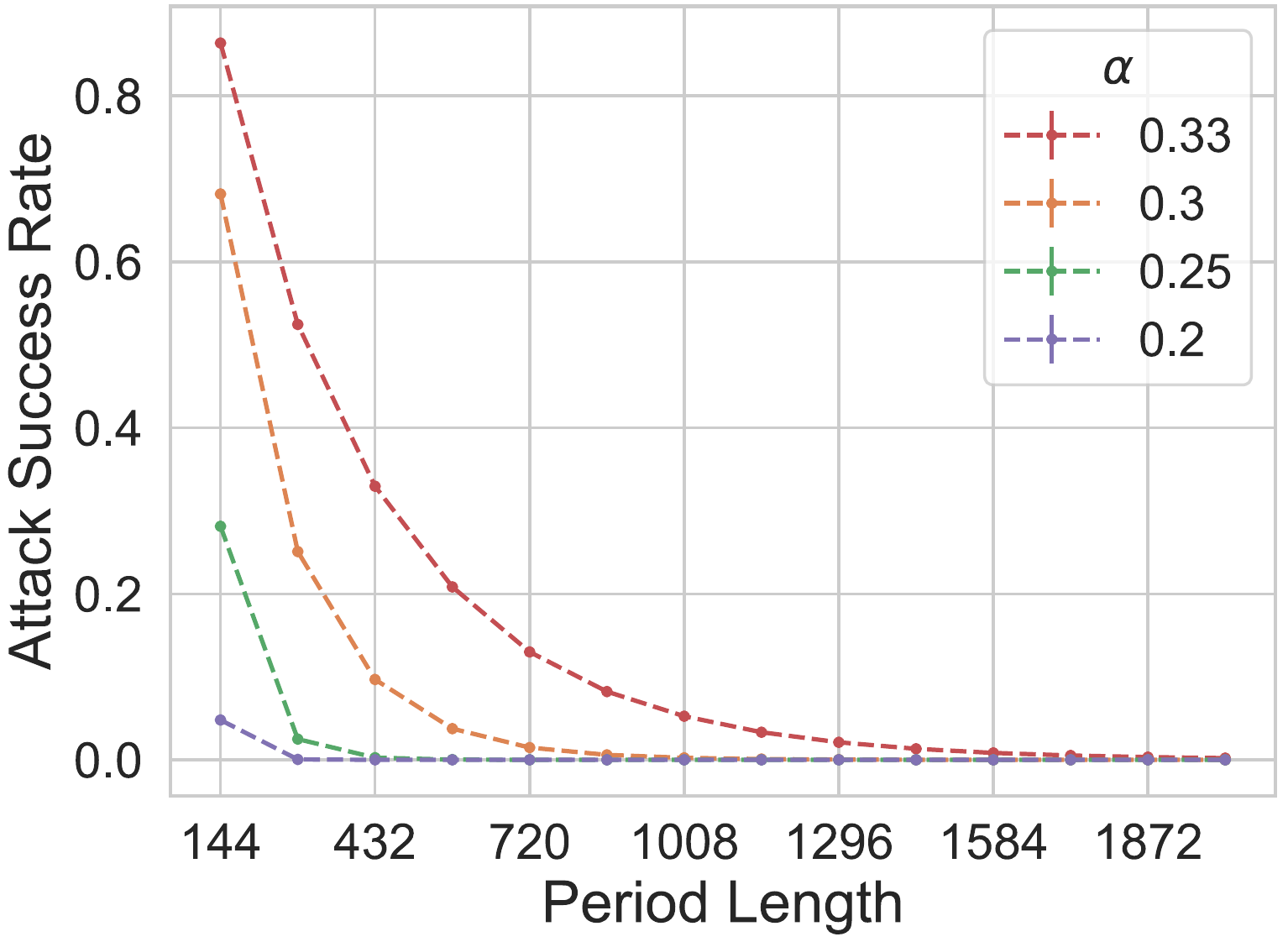}
		\caption{Congestion attack success rate as a function of the period length, for $N = 144, K=89$}
		\label{fig:sliding2w-c}
	\end{subfigure}
	\hspace{1em}%
	\begin{subfigure}[t]{.5\linewidth}
		\centering
		\includegraphics[scale=0.37] {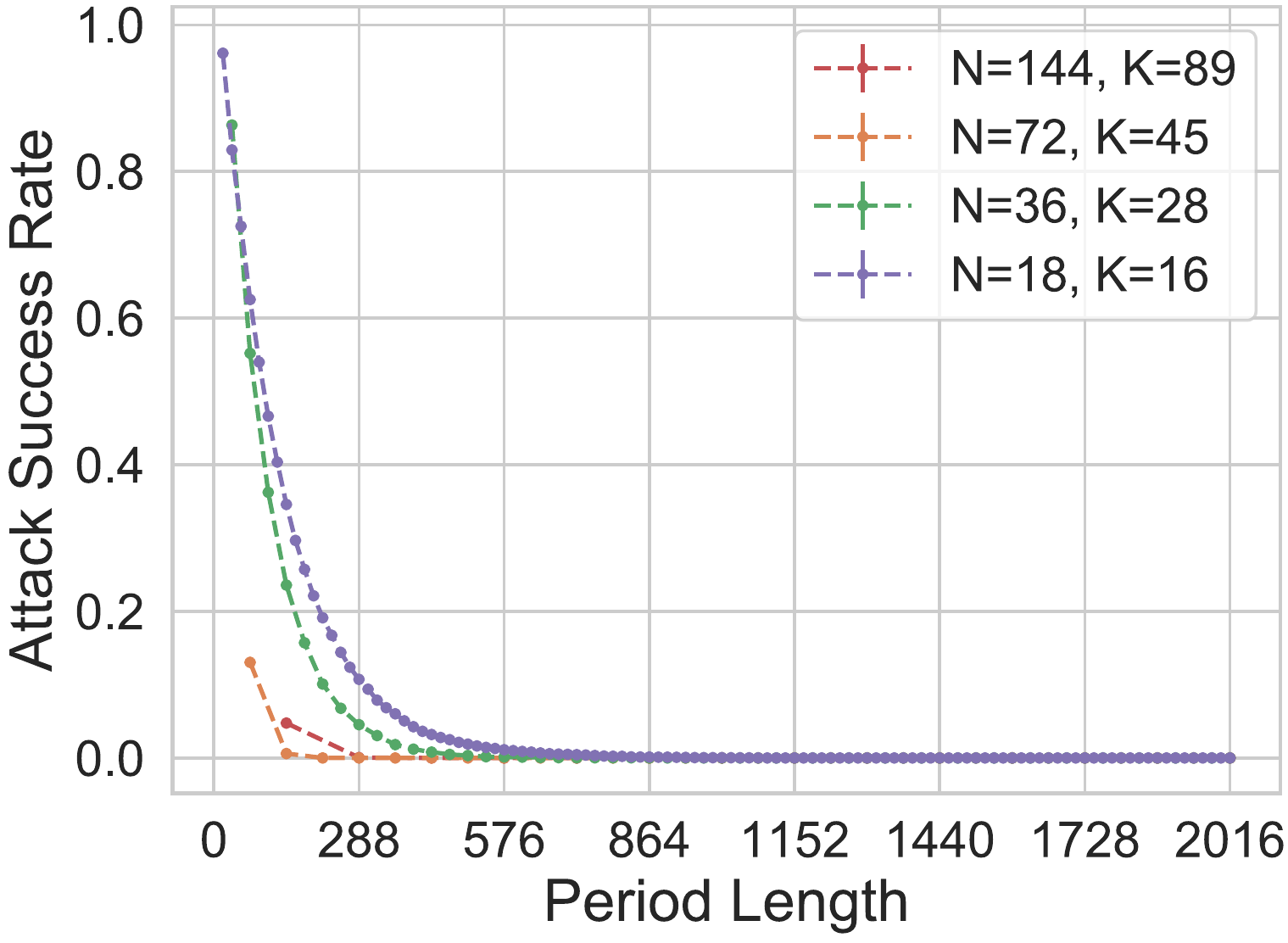}
		\caption{Congestion attack success rate as a function of the period length, for $\alpha=0.2$} 
		\label{fig:sliding2w-d}
	\end{subfigure}
	\caption{Evaluation of the attacks' success rates for $\hat{M}=2016$ (2 weeks in Bitcoin)}
	\label{fig:sliding2w}
\end{figure}

Figure~\ref{fig:sliding2w-a} presents the probability of success in each of the attacks for different $K$ values. As the graph shows, choosing $K=89$ gives protection against both attacks. We compute the upper bounds (from Theorems~\ref{thm:bound_un}-\ref{thm:bound_co}) for this value of $K$ in Figure~\ref{fig:sliding2w-b}. 
The presented bounds as they appear in the graph are loose compared to the simulation results and afford a low level of defense, especially against the congestion attack. These bounds give us useful, but non-tight, upper bounds on the results for periods that are of longer length, for which the probability is extremely small. To get more precise results, we use more simulations to compute the congestion attacks' success rate for different period lengths and present the results in Figure~\ref{fig:sliding2w-c}. Each curve corresponds to a different value of $\alpha$, the computational power of the attacker. The defense rate against congestion attacks is extremely low for short periods. For $\alpha=0.33$, we reach a $>99\%$ defense rate only for periods of $\sim~11$ days or more. Lowering the computational power of the adversary naturally improves these results. For example, considering an attacker with computational power $\alpha=0.2$ results in an above $0.9995$ defense rate for period lengths starting from 2 days.

Finally, in Figure~\ref{fig:sliding2w-d} we consider an attacker with a computational power $\alpha=0.2$ and show the congestion attack success rate for different choices of $N, K$ correspondig to sliding windows of lengths 24/12/6/3 hours. We do not present the uncongestion attack results which had above $99\%$ defense rate for any $N\leq n \leq \hat{M}=2016$. 

We conclude that the longer the periods are, the higher and more effective the protection against attacks is. In Ethereum, we obtained very high defense rates even when choosing short sliding window sizes and against strong attackers. In Bitcoin, on the other hand, we need to compromise on the window size and on the attackers' power to achieve higher defense.

We defined uncongested period protocols and suggested a concrete one, the Sliding Window protocol, which meets our requirements (as defined in Section~\ref{desirableProperties}). In the next section, we will describe how to use an uncongested period protocol to adjust the challenge-response protocol to deal with congested periods.

\subsection{Application to Challenge-Response Protocols}
A challenge-response protocol consists of a challenge that takes effect at time \textbf{$T_c$} and a response deadline \textbf{$T_{rd}$} (see section~\ref{background:challenge-response}).
We link $T_c$ and $T_{rd}$ to their corresponding block height and denote by $b(\textbf{$T$})$ the block at height $T$.  

The parties involved in the challenge decide in advance on an uncongestion period protocol $\IsPeriodUnCongested{}$ to use. We recall that $\IsPeriodUnCongested:\{0,1\}^*\rightarrow \{0,1\}$ accepts a congestion vector (a binary series representing the congestion signal of blocks in a period) and returns 1 if the period is congested and 0 otherwise. 
To apply the uncongestion period protocol, the parties adjust \textbf{$T_{rd}$} to a short deadline that gives them a reasonable time to respond to the challenge assuming an optimal case with no congestion.

The response deadline \textbf{$T_{rd}$} is applied only in the event that the challenge window $Pe=(b(\textbf{$T_c$}),b(\textbf{$T_c+1$}),...,b(\textbf{$T_{rd}$}))$ is uncongested.
In the case where the challenge window is congested, we repeatedly extend \textbf{$T_{rd}$}, 1 block at a time, as long as it remains congested.
To avoid an edge case where the deadline is extended indefinitely, we define \textbf{$\hat{T}_{rd} = T_c + \hat{M}$},
an upper bound on the deadline (see Definition~\ref{def:M}).
The challenge-response protocol adjustment is summarized in the algorithm below. 

\begin{algorithm}[ht]
	\DontPrintSemicolon
	$T_c\gets init$\;
	$T_{rd}\gets init$\;
	$Pe=(b(\textbf{$T_c$}),b(\textbf{$T_c+1$}),...,b(\textbf{$T_{rd}$}))$\;
	$Pe^c\gets congestion\_vector(Pe)$\;
	\While{$\IsPeriodUnCongested(Pe^c)=0$ and $T_{rd}<\textbf{$\hat{T}_{rd}$}$}{
		$T_{rd}\gets T_{rd}+1$\;
		$Pe=(b(\textbf{$T_c$}),b(\textbf{$T_c+1$}),...,b(\textbf{$T_{rd}$}))$\;
	}
\end{algorithm}

We emphasize that the extension of the deadline is not necessarily carried out at the exact moment of the deadline (since smart contract actions need to be triggered by a transaction to the contract). Instead, a transaction that is submitted afterwards is determined to be either before or after the deadline given any possible extensions that are due. The uncongestion period protocol is specified in advance in the smart contract, and the deadline calculation is triggered either by a late response to the challenge or by the challenger that claims that a response did not arrive in time.

\section{Implementation}
\label{sec:Implementation}
We provide an implementation of the Sliding Window protocol as an Ethereum smart contract using the EIP 1559 \emph{base fee} to determine block congestion.
EIP 1559 implements a \emph{base fee} that is adjusted up and down by the protocol according to how congested the network is. 
The EVM supports fetching the \emph{base fee} of the highest (current) block. 
We suggest extending this to fetch the \emph{base fee} of any block, and to add an opcode that checks whether a block is congested (without such opcodes, it is not possible to fully implement the mechanisms put forward in this paper).
This opcode will receive as inputs a block and a maximum base fee (chosen by a user) and will return whether the maximum base fee exceeds the block's base fee.

In the implementation, we set the sliding window size equal to the initial deadline of the examined period (before being granted any extension).

The full \href{https://github.com/stonecoldpat/slidingwindow}{github}\footnote{\href{https://github.com/stonecoldpat/slidingwindow}{https://github.com/stonecoldpat/slidingwindow}} repository includes the smart contracts, the new opcode, and the tests. In addition, we include the Solidity code of the contracts in Appendix~\ref{app:code}.

\section{Conclusion}
\label{sec:Conclusion}
In this paper, we tackled a problem that arises when challenge-response protocols face congested periods.
When the network experiences congestion, 
users will often miss the response deadline, which can lead to serious issues including financial loss. We formalized the problem and proposed a new protocol called the Sliding Window as a solution. Our protocol defines a reliable way to detect congested periods by looking only at the data available on-chain.
We then used this to extend the challenge-response deadline when congestion occurs.
We studied the security of the protocol for different parameters.
Our results showed that it is possible to decrease the 
time settlement (deadline) of challenge-response protocols significantly, while expanding the security of the protocol to deal with cases of congestion.

For future work, it would be interesting to evaluate and optimize this protocol and its security analysis for more realistic congestion settings---in particular, settings in which congestion is correlated between consecutive blocks---and to provide more experimental analysis of these settings. Is is also of interest to explore whether Ethereum's proposed base fee can be used as a sufficiently robust congestion signal. 

\section{Acknowledgments}
	Ayelet Lotem and Aviv Zohar are partially supported by grants from the Israel Science Foundation (grants 1504/17 \& 1443/21) and by a grant from the HUJI Cyber Security Research Center in conjunction with the Israel National Cyber Bureau.

\bibliographystyle{splncs04}
\bibliography{main}

\newpage
\appendix

\ifsubmission{
	\section{Proofs}\label{app:proofs}
	
	\subsection{Proofs of Section~\ref{sec:definitions}}\label{app:proofs1}
	Proof of Proposition~\ref{prop:one}.
	\begin{newProposition}{1}
		A miner manipulating a block $\block$ to make it $(\theta_1,\gamma_1)$-congested when it is not will lose a potential profit of at least  $\blocksizelimit\cdot\int_{1-(\gamma_1-\gamma_{\block}(\theta_1))}^{1} \theta_{\block}(\gamma) \,d\gamma$.
	\end{newProposition}
	\proofpropositionone
	
	Proof of Proposition~\ref{prop:two}.
	\begin{newProposition}{2}
		A miner manipulating a block $\block$ to reverse its signal from $(\theta_1,\gamma_1)$-congested to not congested will lose a potential profit of at least  $\blocksizelimit\cdot\int_{\gamma_1}^{\gamma_{\block}(\theta_1)}(\theta_{\block}(\gamma)-\theta_1) \,d\gamma$.
	\end{newProposition}
	\proofpropositiontwo
	
	\subsection{Proof of Section~\ref{sec:UncongestedPeriodProtocols}}\label{app:proofs2}
	
	\proofconscuivemonotone
	\LconsecEffInfo
	
	

}
\fi

\section{Block Congestion Definition: Examples}\label{app:block_congestion_def_examples} 
In Definition~\ref{block_congestion_def}, we offered to define the congestion of a block based on the fee densities.
Other, less effective, ways in which block congestion could be defined are listed below:
\begin{itemize}
	\item \textit{By transaction with lowest fee density: }
	A block is $\theta$-congested if the lowest fee density for a transaction in it is bigger than $\theta$: $\min_{tx\in \block}\fee{tx} \geq \theta$.
	However, a miner could decide to always enter a single transaction with a very low fee density (less than $\theta$) and easily change a block from congested to not congested; hence this scheme is easily manipulable.
	\item \textit{By transaction with highest fee density:}
	A block is $\theta$-congested if the highest fee density of its transactions is bigger than $\theta$: $\max_{tx\in \block}\fee{tx} \geq \theta$.
	A miner could decide to always enter a single (dummy) transaction with a high fee density (higher than $\theta$) and easily change a block from not congested to congested; hence this scheme is also easily manipulable.
	\item \textit{By non-zero-fee transaction occupancy ("block size"):}
	A block is $\gamma$-congested if it is full at $\gamma$-fraction of its occupancy with non-zero-fee transactions: $ \sum_{tx\in B:~\fee{tx}>0} \size{tx} \geq \gamma\cdot\blocksizelimit$.
	A miner could artificially add transactions with positive fee density to fill the block at no cost.
	\item \textit{By transaction fees (instead of fee density): }
	A block is ($f$,$\gamma$)-congested if at least a fraction $\gamma$ of it is filled with transactions with fees above some value $f$:
	$\sum_{tx\in \block: \fee{tx}\cdot\size{tx}\geq f}\size{tx} \geq \gamma \cdot \blocksizelimit$.
	A miner could prioritize transactions by their size in order to decide the congestion signal, without lowering his profit from the fees ($\utility{\block}$).
\end{itemize}

\section{Code} \label{app:code}

\begin{figure}[ht]
    \begin{tabular}{l}
		\quad $\contract$ \texttt{SlidingWindow} \is{} \texttt{BlockchainMock} \{ \\ \\
		\quad \quad $\oninput$ \texttt{isPeriodCongested}(\uint{} \texttt{startBlock}, \uint{} \texttt{k}, \uint{} \texttt{n}, \\ 
		\quad \quad \quad \uint{} \texttt{maximumBaseFee}) \public{} \view{} \returns{} (\bool) $\{$\\ \\
		\quad \quad \quad \require(\texttt{n$>=$k}, \requirecomment{`N should be greater than or equal to K.'});\\
		\quad \quad \quad \require(\texttt{blocks.length$>=$n}, \requirecomment{`Total should be greater than or equal to N.'}); \\ \\
		\quad \quad \quad \uint{} \texttt{totalCongested = 0;} \\
		\quad \quad \quad \bool[] \memory{} \texttt{recordCongestion} = \new{} \bool$[]$\texttt{(blocks.length $-$ startBlock);} \\
		\quad \quad \quad \\
		\quad \quad \quad \for(\uint{} \texttt{i=startBlock; i $<$ startBlock+n; i++)} $\{$\\
		\quad \quad \quad \quad \comment{// Keep a record of this block's congestion.}\\ 
		\quad \quad \quad \quad \texttt{recordCongestion[i] = isCongested(i, maximumBaseFee);} \\ 
		\quad \quad \quad \quad \\ 
	    \quad \quad \quad \quad \comment{// Sum of congestion (so far). } \\ 
		\quad \quad \quad \quad \ifs\texttt{(recordCongestion[i])} $\{$ \\ 
		\quad \quad \quad \quad \quad \texttt{totalCongested = totalCongested + 1;}\\ 
		\quad \quad \quad \quad  $\}$\\ 
		\quad \quad \quad \quad  \\ 
		\quad \quad \quad \quad  \ifs\texttt{(totalCongested$>$=k)} $\{$\\ 
		\quad \quad \quad \quad \quad \return{} \trues ; \\
		\quad \quad \quad \quad $\}$ \\ 
		\quad \quad \quad $\}$ \\ 
		\quad \quad \quad \\
		\quad \quad \quad \comment{// Activate the sliding window.}\\
		\quad \quad \quad \for(\uint{} \texttt{i=startBlock+n; i$<$blocks.length; i++)} \\
		\quad \quad \quad \\
		\quad \quad \quad \quad \comment{// Remove start of the window.} \\
		\quad \quad \quad \quad \ifs\texttt{(recordCongestion[i-n]} $\{$ \\
		\quad \quad \quad \quad \quad \texttt{totalCongested = totalCongested - 1;}\\
		\quad \quad \quad \quad $\}$ \\
		\quad \quad \quad \quad \\
		\quad \quad \quad \quad \comment{Keep a record of this block's congestion.} \\
		\quad \quad \quad \quad \texttt{recordCongestion[i] = isCongested(i, maximumBaseFee);} \\
		\quad \quad \quad \quad \\
		\quad \quad \quad \quad \comment{// Add to the end of the window.}\\
		\quad \quad \quad \quad \ifs\texttt{(recordCongestion[i])} $\{$ \\
		\quad \quad \quad \quad \quad \texttt{totalCongested = totalCongested + 1;}\\
		\quad \quad \quad \quad $\}$ \\
		\quad \quad \quad \quad \\
		\quad \quad \quad \quad \ifs\texttt{(totalCongested$>$=k)} $\{$ \\
		\quad \quad \quad \quad \quad \return{} \trues ; \\
		\quad \quad \quad \quad $\}$ \\
		\quad \quad \comment{// Not congested.} \\ 
	    \quad \quad \return{} \falses ; \\ 
		\quad \quad  $\}$ \\
		\quad $\}$
	\end{tabular} 
		
\end{figure}

\begin{figure}[H]
    \begin{tabular}{l}
		\quad $\contract$ \texttt{BlockchainMock} \{ \\ \\
		
		\quad \quad \comment{// Simulate block.basefee(). EVM only fetches the current basefee.} \\
		\quad \quad \struct{} \texttt{Block} $\{$ \uint{} \texttt{baseFee;} $\}$ \\
		
		\quad \quad \texttt{Block[]} \public{} \texttt{blocks;} \\ \\ 
		
		\quad \quad \comment{// Should the caller consider this block congested?} \\
		\quad \quad $\oninput$ \texttt{isCongested}(\uint{} \texttt{blockNumber}, \uint{} \texttt{maximumBaseFee)} \\ 
		
		\quad \quad \public{} \view{} \returns{} (\bool) $\{$ \\ \\
		
        \quad \quad \quad \ifs(\texttt{blocks[blockNumber].baseFee $>$ maximumBaseFee)} $\{$ \\
        \quad \quad \quad \quad \return{} \trues; \\
        \quad \quad \quad $\}$ \\ \\
         
        \quad \quad \quad \return{} \falses; \\
		\quad \quad  $\}$ \\ 
		\quad $\}$
	\end{tabular} 
		
\end{figure}

\begin{figure}[H]
    \begin{tabular}{l}
		\quad $\contract$ \texttt{Auction} \is{} \texttt{SlidingWindow} \{ \\ \\
		
		\quad \quad \bool{} \texttt{start = \falses;} \\
		\quad \quad \uint{} \texttt{startBlock;} \\
		\quad \quad \uint{} \texttt{k;} \\
		\quad \quad \uint{} \texttt{n;} \\ 
		\quad \quad \uint{} \texttt{gasPriceCeiling;} \\ \\
		\quad \quad \oninput{} \texttt{startAuction}(\uint{}  \texttt{\_startBlock}, \uint{} 
		\texttt{\_k}, \uint{} \texttt{\_n},  \\
		\quad \quad \uint{} \texttt{\_gasPriceCeiling}) \public{} $\{$ \\
		\quad \quad \quad \texttt{startBlock = \_startBlock;} \\
		\quad \quad \quad \texttt{k = \_k;} \\
		\quad \quad \quad \texttt{n = \_n;} \\
		\quad \quad \quad \texttt{gasPriceCeiling = \_gasPriceCeiling;} \\
		\quad \quad \quad \texttt{start =} \trues; \comment{// Kick-start the auction!} \\
		\quad \quad $\}$ \\ \\
		
		\quad \quad \oninput{} \texttt{finaliseAuction()} \public{} \returns(\bool) $\{$ \\
		\quad \quad \quad \ifs(\texttt{isPeriodCongested(startBlock, k, n, gasPriceCeiling))} $\{$ \\
		\quad \quad \quad \quad \return{} \falses; \\
		\quad \quad \quad $\}$ \\ \\

        \quad \quad \quad \return{} \trues; \\
		\quad \quad  $\}$ \\
		\quad $\}$
	\end{tabular} 
		
\end{figure}

	\clearpage
	\addcontentsline{toc}{section}{Appendices}
	\renewcommand{\thesubsection}{\Alph{subsection}}

\end{document}